\newtheorem{theorem}{Theorem}
\theoremstyle{definition}
\newenvironment{customthm}[1]
  {\innercustomthm}
  {\endinnercustomthm}
\newenvironment{customdef}[1]
  {\innercustomdef}
  {\endinnercustomthm}
\begin{document}

\preprint{Article}

\title{K-sparse Pure State Tomography with Phase Estimation}
\thanks{This work was supported by TUBITAK (The Scientific and Technical Research Council of Turkey) under Grant $\#$119E584.}%

\author{Burhan Gulbahar}
 \email{burhan.gulbahar@yasar.edu.tr}
\affiliation{%
 Department of Electrical and Electronics Engineering, Yasar University, Izmir, Turkey \\
}%

\date{\today}

\begin{abstract}
Quantum state tomography (QST) for reconstructing pure states requires exponentially increasing resources and measurements with the number of qubits by using state-of-the-art quantum compressive sensing (CS) methods. In this article, QST reconstruction for any pure state  composed of the superposition of $K$ different computational basis states of $n$ qubits in a specific measurement set-up, i.e., denoted as \textit{$K$-sparse}, is achieved without any initial knowledge  and with quantum polynomial-time complexity of resources based on the assumption of the existence of polynomial size quantum circuits for implementing exponentially large powers of a specially designed unitary operator. The algorithm includes $\mathcal{O}(2 \, / \, \vert c_{k}\vert^2)$ repetitions of conventional phase estimation algorithm depending on the probability $\vert c_{k}\vert^2$ of the least possible basis state in the superposition and $\mathcal{O}(d \, K \,(log K)^c)$  measurement settings with conventional quantum CS algorithms independent from the number of qubits while dependent on $K$ for constant $c$ and $d$. Quantum phase estimation algorithm is exploited based on the favorable eigenstructure of the designed operator to represent any pure state as a  superposition of eigenvectors. Linear optical set-up is presented for realizing the special unitary operator which includes beam splitters and phase shifters where propagation paths of single photon are  tracked with which-path-detectors. Quantum circuit implementation is provided by  using only CNOT, phase shift and  $-  \pi  \, / \, 2$ rotation gates around X-axis in Bloch sphere, i.e., $R_{X}(- \pi \, / \, 2)$, allowing to be realized in NISQ devices.  
Open problems are discussed regarding the existence of the unitary operator and its practical circuit implementation. 
\end{abstract}

\maketitle

\section{Introduction}
\label{sec1}

Quantum state tomography (QST) determines an unknown state by making measurements on identical copies with  highly important applications in various areas of quantum technologies \cite{d2003quantum, cramer2010efficient, christandl2012reliable}.  QST and process estimation tasks for  characterizing and reconstructing purposes generally require exponential amount of measurements \textit{$\mathcal{O}(2^{2n})$} with the number of qubits $n$ without any knowledge about the state. Low-rank density matrices with rank $r \ll 2^n$ approximating pure states are reconstructed with \textit{$\mathcal{O}(r \, 2^n \, n^c)$} Pauli measurements for a constant $c \in [2, 6]$ by exploiting compressive sensing (CS) with convex or non-convex programming approaches \cite{gross2010quantum, kyrillidis2018provable, flammia2012quantum} and experimental studies \cite{steffens2017experimentally}. Linear optical methods are already utilized such as in \cite{banchi2018multiphoton} for multi-mode multi-photon states where finite number of linear optical interferometer configurations are used. The reconstruction method requires  $\sharp$P-hard matrix permanent calculations with \textit{$\mathcal{O}\big(poly(D_{n,M},\, 2^n) \big)$} complexity where $n$ is the number of photons, $M$ is the number of modes and    $D_{n,M} = {n \choose  n-M}$. 

In this article,  QST of any pure state of $n$ qubits composed of the superposition of $K$ different computational basis states in a specific measurement set-up, i.e., denoted as \textit{$K$-sparse} pure state, is achieved in quantum polynomial-time without any knowledge about the state including the value of $K$ based on assumptions about the implementation of the black-boxes of specially designed unitary operator $U_{\vec{\Phi}}$. It is assumed that exponentially large powers of $U_{\vec{\Phi}}$, i.e., controlled-$U_{\vec{\Phi}}^{2^j}$ operations for finite $j$, can be implemented with polynomial size quantum circuits. Then, the main QST problem for $n$ qubits is basically converted to the problem of quantum CS based QST of $\log(K)$ qubits by creating independence from the number of qubits $n$ for finite $K \ll 2^n$. 

Firstly, an ancillary single qubit initialized to $(\ket{0} \, + \, \ket{1}) \, / \, \sqrt{2}$ is included by increasing the number of qubits to $n+1$. After applying  n-qubit Hadamard transform to the pure state, QST problem for the resulting $n\,+\,1$ qubits is shown to be equivalent to conventional phase estimation problem for estimating eigenvalues and projecting onto eigenvectors of $U_{\vec{\Phi}}$.  Here, we mainly exploit favorable structure of the eigenvectors of $U_{\vec{\Phi}}$ to represent any input  pure state as a superposition of eigenvectors of $U_{\vec{\Phi}}$ with the help of ancillary qubit. Then, the application of conventional phase estimation algorithm determines unknown computational basis states composing the pure state. After learning the  basis locations of sparsity, i.e., the unknown $K$ different computational basis states, the conventional quantum CS methods are applied to estimate $K$ different complex superposition coefficients.  

We present a linear optical set-up to realize such a unitary operator. It is motivated by the target of tracking the evolution of single photons through consecutive beam splitters (BSs) and phase shifters by using consecutive which-path-detectors (WPDs) after each BS. We exploit the surprising eigenstructure of the designed set-up. WPDs allow to track the evolution of photons in various settings \cite{englert1996fringe}. In addition, quantum circuit implementation of $U_{\vec{\Phi}}$ is presented to be realized in universal quantum computers or noisy-intermediate scale quantum (NISQ) devices.

More specifically, consider the following state in the problem definition composed of the superposition of $K$ different computational basis states of $n$ qubits, i.e., denoted as \textit{ $K$-sparse} pure state, for estimation with QST. The problem is defined as follows.

\begin{customdef}{1} \textbf{$K$-sparse pure state reconstruction problem}: 
\textit{Estimate the unknown values and reconstruct any pure state being in a superposition of $K$ different computational basis states of $n$ qubits defined as follows:
\begin{eqnarray}
\label{purestate}
\ket{\Psi} =   \sum_{k = 1}^{K}  c_k \, e^{\imath \, \vartheta_k}\, \ket{s_{k,n} \, s_{k, n-1} \, \hdots s_{k,1}}
 \end{eqnarray} 
where $c_k > 0$ and $c_k \, e^{\imath \, \vartheta_k}$ is the complex superposition coefficient of the unknown computational basis state $\ket{s_{k,n} \, s_{k, n-1} \, \hdots s_{k,1}}$ with unknown values of $s_{k,j} = 0$ or $1$ for $j \in [1, n]$ and $K \ll 2^n$.  The unknown values to be estimated are $c_k$ and $ \vartheta_k$ for $k \in [1, K]$, the number of superposition components $K$ and $s_{k,j}$ for $j \in [1,n]$, i.e., the locations of sparsity, without any initial knowledge about the state. The structure of the computational basis states for each qubit, i.e., $\ket{0}$ and $\ket{1}$, is initially defined depending on the measurement set-up. }
\end{customdef}

Observe that the sparsity is defined with respect to the chosen computational basis states of each qubit, i.e., $\ket{0}$ and $\ket{1}$, in the measurement set-up.   The computational basis state of each qubit depends on the given measurement set-up and the pure state to be estimated is assumed to have finite number of superposition components without knowing the exact value of $K$.  

In classical CS, the problem is defined in  \cite{baraniuk2007compressive} as finding the minimum  $\mathscr{L}_1$-norm with $\widehat{\mathbf{x}} = \mbox{argmin} \, \Vert \mathbf{x'} \Vert_1$ such that  $\mathbf{y} = \Theta \, \mathbf{x'}$ where $\mathbf{y}$ is the measurement result, $\mathbf{x}$ is the unknown vector and $\Theta $ is a Gaussian matrix. It is observed that if $\mathbf{x}$ is $K$-sparse, then $\Theta $ with the dimension $M \times N$ is enough where  $M \, = \, c\, K \, log(N \, / \, K)$ is the number of measurements with constant $c$ and $N$ is the dimension of the signal $\mathbf{x}$. The computational complexity of the solution with basis pursuit is $\mathcal{O}(N^3)$. In  quantum CS, e.g., with convex solution \cite{gross2010quantum}, randomly chosen Pauli expectations $tr(\mathbf{P}^k \,\rho)$  are utilized to minimize $\Vert \sigma \Vert_{tr}$ where $ tr(\mathbf{P}^k\, \sigma ) = tr(\mathbf{P}^k\, \rho )$ where $\Vert \sigma \Vert_{tr}$  denotes the sum of the singular values of $\sigma $,  the unknown density matrix is  $\rho$, $M$  is the number of measurements with $k \in [1, M]$ and $\mathbf{P}^k  \, \equiv \, \bigotimes_{j=1}^{n} P_i^k$ is a random Pauli measurement  with $P_i^k \in \lbrace \mathtt{1}, \sigma^x, \,\sigma^y, \, \sigma^z \rbrace$. $M$ is mainly reduced from the requirement of roughly $2^{2n}$ measurements to $2^n$ measurements by using CS based on the assumption of the purity of the sampled state.
 
There are diverse number of studies utilizing quantum CS or alternative approaches in more practical manners for further reducing the number of measurements and the required resources such as online learning and shadow tomography \cite{aaronson2019online, aaronson2019shadow}, self-calibrating quantum state tomography by relaxing the blind tomography problem to sparse de-mixing \cite{roth2020semi} and hierarchical compressed sensing \cite{eisert2021hierarchical}, adaptive compressive tomography without  a-priori information  \cite{ahn2019adaptive},  reduced density matrices \cite{cotler2020quantum, xin2017quantum}, matrix product state tomography \cite{lanyon2017efficient, cramer2010efficient}, neural network  \cite{torlai2018neural}, machine learning \cite{lohani2020machine} based approaches or different methods including \cite{pereira2021scalable}.
There is not any polynomial-time quantum algorithm or solution method available for exactly reconstructing $K$-sparse pure states  in (\ref{purestate}), i.e., $K$-sparse vectors in dimension $2^n$, with neither quantum nor classical polynomial-time resource complexity.   It is an open issue to achieve $\mathscr{L}_1$-norm minimization based QST of $K$-sparse pure states with quantum algorithms of polynomial-time complexity.

In this article, we provide an alternative quantum polynomial-time solution with $\mathcal{O}(d \, K \,(log K)^c)$  measurement settings for $K$-sparse pure states in parallel with  $\mathcal{O}\big(c\, K \, log(2^n \, / \, K) \big)$ measurements in classical CS without any complexity exponentially growing with $2^n$. On the other hand, the computational complexity is also maintained as quantum polynomial-time providing a complete practicality in QST tasks in analogy with the practical utilization of CS in classical world with exponentially smaller dimensions compared with the Hilbert space size of quantum states.  We provide the locations of the $K$-sparse points without solving any $\mathscr{L}_1$-norm minimization problem with a different perspective compared with conventional CS. Phase estimation presents a new approach for QST of pure states while encouraging the design of new unitary operators $U_{\vec{\Phi}}$ with the proposed eigenstructure and practical implementation capability. It is an open issue to realize polynomial size quantum circuit implementations for exponentially large powers of the presented specific design based on linear optics  and WPDs. Furthermore, analyzing the effects of noise in input state and estimation errors are important for practical considerations \cite{riofrio2017experimental, steffens2017experimentally}. Extension to mixed state inputs is also an open issue.

The proposed architecture is promising to be utilized in all state estimation  and process modeling tasks \cite{d2003quantum, cramer2010efficient, christandl2012reliable, aaronson2019online}. Quantifying the amount of entanglement existing in a quantum state  is another potential application \cite{schneeloch2019quantifying}. Quantification generally requires full state tomography and complex calculations for entanglement monotones \cite{di2013embedding}. Furthermore, the proposed method can be utilized to map classical states into eigenvectors of $U_{\vec{\Phi}}$  for various machine learning procedures similar to embedding into the amplitude, basis or the dynamics of quantum systems through Hamiltonian embedding \cite{schuld2018supervised}, or using various quantum feature map operators  \cite{havlivcek2019supervised, schuld2019quantum, lloyd2020quantum}. The embedded classical data is extracted reliably in quantum polynomial-time by using the proposed QST architecture. 

One fundamental theorem and supporting conjecture are formulated in this article. Theorem-1 reduces the exponential amount of resources and measurements necessary in conventional QST and quantum CS algorithms to quantum polynomial-time resources as follows:

\begin{theorem}
There exists a unitary operator $U_{\vec{\Phi}}$ with a specially defined eigenstructure to be utilized in quantum phase estimation algorithm  so that any $K$-sparse $n$-qubit pure state  can be reconstructed  after $\mathcal{O}(1 \, / \, m)$ repetitions of conventional $t$-bit quantum phase estimation algorithm with  $\widetilde{t} \, \approx t\,+\,log \big(2 \, + \, 1 \, / \, \,(2 \, \epsilon) \big)$ ancillary qubits having success probability of at least $(1  \, - \, \epsilon)$ and consecutive $\mathcal{O}(d \, K \,(log K)^c)$ measurement settings based on existing quantum CS methods for $c \in [2, 6]$ and some constant $d$ reducing the error exponentially. $m$ is such that the number of repetitions is $\mathcal{O}(2 \, / \, \min_{k \in [1, K]} \lbrace \vert c_k \vert^2 \rbrace)$ independent from the number of qubits while depending on the probability $\vert c_k \vert^2$  of the least probable basis state in the superposition for $k \in [1, K]$.   It is assumed that black-boxes of $U_{\vec{\Phi}}^{2^j}$ are available for $j \in [0, \widetilde{t}-1]$.
\end{theorem}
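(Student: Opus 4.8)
\emph{Proof idea and plan.} The plan is to supply a concrete $U_{\vec{\Phi}}$ meeting the black-box hypothesis, reduce the recovery of the support $\{s_{k,n}\cdots s_{k,1}\}_{k=1}^{K}$ to conventional phase estimation, and then recover the coefficients with conventional quantum compressed sensing on $\lceil\log K\rceil$ qubits. Concretely, I would work on $n+1$ qubits and take $U_{\vec{\Phi}}$ to be a unitary whose eigenvectors are the Hadamard images $H^{\otimes n}\ket{s}$ of the computational basis states (tensored with a fixed ancilla component; the extra ancillary qubit and the $1/\sqrt{2}$ amplitude appearing below are tied features of this realization), carrying eigenphases $\phi_{s}=0.s_{n}s_{n-1}\cdots s_{1}$ in binary. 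The point of this \emph{special eigenstructure} is that $s\mapsto\phi_{s}$ is a bijection of $\{0,1\}^{n}$ onto the $n$-bit dyadic rationals in $[0,1)$, so the eigenphase pins down the bit string exactly and $n$-bit phase estimation is exact on these phases. Verifying that such a $U_{\vec{\Phi}}$ is unitary and has exactly this spectrum — and exhibiting its linear-optical and circuit realizations — is the content of the constructions in the body; the theorem then takes as given that the black boxes $U_{\vec{\Phi}}^{2^{j}}$ admit $\mathrm{poly}(n)$-size circuits.

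Next I would carry out the reduction stated in the introduction. Adjoining the ancilla in $(\ket{0}+\ket{1})/\sqrt{2}$ and applying $H^{\otimes n}$ to $\ket{\Psi}$ turns the $(n+1)$-qubit register into a superposition $\sum_{k=1}^{K}(c_{k}e^{\imath\vartheta_{k}}/\sqrt{2})\,\ket{v_{s_{k}}}\;+\;\cdots$ of eigenvectors of $U_{\vec{\Phi}}$ (the omitted terms carrying the remaining half of the norm), with each relevant eigenvector $\ket{v_{s_{k}}}$ carrying weight $|c_{k}|^{2}/2$; establishing this identity — and checking that the ancilla neither destroys the eigenvector structure nor perturbs the eigenphases — is the main algebraic step. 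Running $t$-bit phase estimation with $t\ge n$ and $\widetilde{t}=t+\lceil\log(2+1/(2\epsilon))\rceil$ ancillary qubits, the standard accuracy theorem gives, per run: the target register collapses onto some $\ket{v_{s_{k}}}$ with probability $|c_{k}|^{2}/2$, and conditioned on that the phase register outputs the exact value $\phi_{s_{k}}$ with probability at least $1-\epsilon$; inverting the bijection $\phi\mapsto s$ then yields $s_{k}$. The circuit uses $\widetilde{t}$ calls to the assumed black boxes plus an inverse quantum Fourier transform of size $\mathcal{O}(\widetilde{t}^{\,2})$, hence has size $\mathrm{poly}(n)$.

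Then I would bound the repetitions and finish with compressed sensing. Over independent runs the distinct recovered strings form a weighted coupon-collector process: after $N$ runs the probability that some $s_{k}$ has not yet appeared is at most $\sum_{k}\big(1-(1-\epsilon)|c_{k}|^{2}/2\big)^{N}\le K\exp\!\big(-(1-\epsilon)N\min_{k}|c_{k}|^{2}/2\big)$, so taking $N=\mathcal{O}\!\big(1/\min_{k}|c_{k}|^{2}\big)$ (up to the $\log K$ union-bound factor absorbed into the constant) — equivalently $\mathcal{O}(1/m)$ with $m=\Theta\!\big(\tfrac12\min_{k}|c_{k}|^{2}\big)$, i.e.\ $\mathcal{O}(2/\min_{k}|c_{k}|^{2})$ — succeeds; this count is independent of $n$, and once the set of outcomes stabilizes the value of $K$ and the full support $\{s_{k}\}_{k=1}^{K}$ are known. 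Now restrict to the $K$-dimensional subspace $\mathrm{span}\{\ket{s_{k,n}\cdots s_{k,1}}\}$ and relabel it, by an isometry determined by the now-known $s_{k}$, onto $\lceil\log K\rceil$ qubits; the reduced state is pure (rank one), so a standard quantum compressed-sensing / low-rank tomography protocol on $\lceil\log K\rceil$ qubits uses $\mathcal{O}\!\big(r\,2^{\lceil\log K\rceil}(\lceil\log K\rceil)^{c}\big)=\mathcal{O}(d\,K(\log K)^{c})$ measurement settings for $c\in[2,6]$ and a constant $d$, each repeated to drive the failure probability down exponentially, and outputs all $c_{k},\vartheta_{k}$ up to a global phase, completing the reconstruction of $\ket{\Psi}$.

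The main obstacle is the first point, and it is precisely what the theorem hypothesizes away: exhibiting a $U_{\vec{\Phi}}$ whose exponentially large powers $U_{\vec{\Phi}}^{2^{j}}$ genuinely admit polynomial-size circuits — addressed in the body only through a linear-optical construction with which-path detectors and a circuit using $\mathrm{CNOT}$, phase-shift and $R_{X}(-\pi/2)$ gates, and otherwise left as the accompanying conjecture. Granting the black boxes, the remaining delicate parts are the exact amplitude bookkeeping in the Hadamard/ancilla reduction (in particular why the ancilla produces the $1/\sqrt{2}$ and leaves the eigenphases intact) and sharpening the phase-estimation error budget so that $\widetilde{t}\approx t+\log(2+1/(2\epsilon))$ ancillas already secure per-run success $\ge 1-\epsilon$ together with unambiguous decoding $\phi\mapsto s$.
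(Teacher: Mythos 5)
Your second phase (restrict to the known support, relabel onto $\lceil\log K\rceil$ qubits, run low-rank quantum CS with $\mathcal{O}(d\,K(\log K)^{c})$ settings) and your coupon-collector bound $\mathcal{O}(1/\min_{k}|c_{k}|^{2})$ match the paper. The first phase, however, rests on a different eigenstructure than the one the paper defines and uses, and this is where a genuine gap appears. You posit eigenvectors $H^{\otimes n}\ket{s}$ tensored with a \emph{fixed} ancilla component, carrying eigenphases $\phi_{s}=0.s_{n}\cdots s_{1}$, and you read $s$ off the phase register via the bijection $\phi\mapsto s$. You then write the prepared state as $\sum_{k}(c_{k}e^{\imath\vartheta_{k}}/\sqrt{2})\ket{v_{s_{k}}}+\cdots$ with ``the omitted terms carrying the remaining half of the norm.'' Those omitted terms are exactly the problem: the register is $2^{n+1}$-dimensional, so the component of the ancilla orthogonal to your fixed ancilla state lies on \emph{other} eigenvectors of $U_{\vec{\Phi}}$. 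Either that orthogonal piece is not an eigenvector (and then your claimed weights and the per-run collapse probabilities are unjustified), or it is an eigenvector with a different eigenphase, which under your own decoding map generically decodes to some string not in the support. Since this piece carries half of every $|c_{k}|^{2}$, roughly half of all runs would produce spurious ``support'' elements with probability comparable to the true ones, and the support-identification/coupon-collector argument breaks unless you supply a rejection or identification rule, which you do not.

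The paper avoids this precisely through the structure you replaced. Its assumed eigenstructure attaches to each basis string $\ket{s_{k,n}\cdots s_{k,1}}$ a \emph{pair} of eigenvectors $H^{\otimes n}\ket{s_{k,n}\cdots s_{k,1}}(\alpha_{k,0}\ket{0}+\alpha_{k,1}\ket{1})$ and $H^{\otimes n}\ket{s_{k,n}\cdots s_{k,1}}(\beta_{k,0}\ket{0}+\beta_{k,1}\ket{1})$ obeying the duality relation $\beta_{k,0}=a_{k,1}$, $\beta_{k,1}=-a_{k,0}+\imath b_{k,0}$, which makes the two ancilla states an orthonormal basis; hence the ancilla $(\ket{0}+\ket{1})/\sqrt{2}$ decomposes \emph{exactly}, with no residual terms, into these two eigenvectors, with weights $|d_{k,1}|^{2}=1/2+a_{k,0}a_{k,1}$ and $|d_{k,2}|^{2}=1/2-a_{k,0}a_{k,1}$ whose maximum is at least $1/2$ (this is the source of the factor $2$ in $\mathcal{O}(2/\min_{k}|c_{k}|^{2})$). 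Crucially, $s_{k}$ is not decoded from the eigenphase at all: after the phase register is measured, one applies $H^{\otimes n}$ to the data register and measures it directly, so \emph{both} members of the pair return the same correct $s_{k}$; the eigenvalues only need to be mutually distinct to give a clean collapse and a one-to-one pairing list $\{\widetilde{\lambda_{k,l}},\ket{s_{k,n}\cdots s_{k,1}}\}$. Your dyadic-phase encoding is also an extra existence assumption not supported by the paper's linear-optics/WPD construction (whose eigenphases are whatever the phase shifts $\phi_{j}$ produce), and it forces $t\geq n$ phase bits, whereas the paper's route needs only enough precision to distinguish eigenvalues. To repair your argument you would either have to prove that the orthogonal ancilla component is itself an eigenvector whose phase decodes to the same $s$ (which contradicts your bijection, since two distinct eigenphases cannot both equal $0.s_{n}\cdots s_{1}$), or adopt the paper's pair/duality structure and data-register readout.
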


\begin{proof}
The proof is provided in Section \ref{sec2}.
\end{proof}

The supporting conjecture proposes that unitary operators $U_{\vec{\Phi}}$ with the special eigenstructure exploited in Theorem-1 can be realized by using polynomial size quantum circuits and linear optics.  

\begin{customthm}{1}   
\textit{There exist a unitary operator $U_{\vec{\Phi}}$ and its polynomial size quantum circuit implementation composed of CNOT gates, $R_{X}(\frac{-\pi}{2}) \equiv  \frac{1}{\sqrt{2}}\begin{bmatrix}1 & \imath\\ \imath & 1\end{bmatrix} $ gates  and phase shifters $\Phi_{j} \, \equiv   \, \begin{bmatrix}
1 &   0 \\
0 & e^{\imath \, \phi_{j}}\\
\end{bmatrix}$  for $j \in [1, n+1]$ such that it has distinct eigenvalues and unique pairs of eigenvectors with the form specified as quantum states $ H^{\otimes n} \,\ket{s_{k,n}  \, s_{k,n-1} \, \hdots s_{k,1}} \, \big( \alpha_{k,0} \, \ket{0} \, + \, \alpha_{k,1} \, \ket{1}  \big)$  or $ H^{\otimes n} \,\ket{s_{k,n}  \, s_{k,n-1} \, \hdots s_{k,1}} \, \big( \beta_{k,0} \, \ket{0} \, + \, \beta_{k,1} \, \ket{1}  \big)$ corresponding to each $\ket{s_{k,n}  \, s_{k,n-1}  \, \hdots s_{k,1}}$ for $k \in [1, 2^n]$. The parameters of a pair of eigenvectors satisfy the relation such that the parameters of the first eigenvector are $\alpha_{k,0}  = a_0 \, + \, \imath \, b_0$ and $\alpha_{k,1} = a_1$ with $a_0$, $b_0$ and $a_1 \, \in \mathcal{R}$ and $a_0^2 \, + \, b_0^2 \, + \, a_1^2 = 1$ and the parameters of the other eigenvector are $\beta_{k,0} = a_1$ and $ \beta_{k,1} = - a_0 \, + \, \imath \, b_0$  (or their multiplication with arbitrary phase). The parameters $\alpha_{k,0}$,   $\alpha_{k,1}$,  $\beta_{k,0}$ and  $\beta_{k,1}$ depend on $\phi_{j}$  for $j \in [1, n+1]$, $H^{\otimes n}$ is n-qubit Hadamard transform and  $\ket{s_{k,n}  \, s_{k,n-1}  \, \hdots s_{k,1}}$ denotes computational basis state with $s_{k,j} = 0$ or $1$ for $j \in [1, n]$. The existence of the operator satisfying the specific form of eigenstructure depends on the chosen set $\phi_j$ for $j \in [1, n+1]$.}
\end{customthm}

Theoretical analysis and supporting evidence are provided in Section \ref{sec5a}. Linear  optical set-up is provided in Section \ref{sec3} to realize such unitary operators where it is motivated by consecutive WPDs tracking the evolution of  a single photon in a linear optical set-up. In Section \ref{sec4}, quantum circuit implementation is provided for the set-up. Numerical  analysis shows that if uniformly distributed phase shifts are utilized in the linear optical design or the proposed quantum circuit implementation, then $U_{\vec{\Phi}}$ with the specific form of eigenstructure is obtained. However, it is an open issue to determine the conditions on the phase shift values under which  the proposed eigenstructure is satisfied. This is the reason that we provide the result as a conjecture.

The remainder of the paper is organized as follows. In Section \ref{sec2}, QST algorithm is presented. In Section \ref{sec3}, linear optical set-up providing a design of the targeted $U_{\vec{\Phi}}$ is presented. Then, in Section \ref{sec4}, quantum circuit implementation for the designed operator is provided. Eigenstructure of $U_{\vec{\Phi}}$ is analyzed in Section \ref{sec5}. Open problems are presented in Section \ref{sec6}.

\section{$K$-sparse Pure State Tomography Algorithm}
\label{sec2}

QST problem is converted to eigenvector estimation problem by using the eigenvalues estimated with phase estimation algorithm \cite{kitaev1995quantum, nielsen2010quantum}.  The algorithm is summarized in Algorithm-\ref{alg:QST} with two important phases. The proof of Theorem-1 and the description of the algorithm are provided next.

\begin{algorithm}
\label{alg:QST}
\caption{$K$-sparse pure state tomography with phase estimation}
\textbf{$1^{st}$ PHASE:}
\begin{enumerate}
\item Initial pure state: $\ket{\Psi} =   \sum_{k = 1}^{K}  c_k \, e^{\imath \, \vartheta_k}\, \ket{s_{k,n} \, s_{k, n-1} \, \hdots s_{k,1}}$ with unknown $K$, $c_k$, $\varphi_k$ and  $\ket{s_{k,n} \, s_{k, n-1} \, \hdots s_{k,1}}$ for $k \in [1, K]$\\
\item Add an ancillary state $(\ket{0} \, + \, \ket{1}) \, / \, \sqrt{2}$ and apply n-qubit Hadamard transformation to $\ket{\Psi}$ by transforming into a superposition of eigenvectors of the unitary operator $U_{\vec{\Phi}}$: 
$$
\ket{\Psi_e} =  \sum_{k = 1}^{K} c_k \, e^{\imath \, \vartheta_k}\,  H^{\otimes n} \,\ket{s_{k,n} \,  \, s_{k, n-1} \, \hdots s_{k,1}} \bigg(\frac{\ket{0} \, + \, \ket{1}}{\sqrt{2}} \bigg) \,  =   \, \sum_{k = 1}^{K} c_k \, e^{\imath \, \vartheta_k}\, \bigg( \sum_{l = 1}^2 d_{k,l}  \, \ket{E_{k,l}}  \bigg)
$$\\
\item Apply phase estimation and collapse the state to $\ket{\widetilde{\lambda_{k,1}}} \, \ket{E_{k,l}}$ for $l  \in [1,2]$ with probabilities $\vert c_{k}\vert^2 \, \vert d_{k,1}\vert^2  = \vert c_{k}\vert^2 \, (1/2 \, + \,a_{k,0} \, a_{k,1} )$ and $\vert c_{k}\vert^2 \, \vert d_{k,2}\vert^2  = \vert c_{k}\vert^2 \, (1/2 \, - \,a_{k,0} \, a_{k,1} )$. \\
\item Apply n-qubit Hadamard transformation to the second register except the ancillary qubit converting the measured state to either $ \ket{\widetilde{\lambda_{k,1}}} \,  \ket{s_{k,n} \, s_{k, n-1} \, \hdots s_{k,1}} \, \big( \alpha_{k,0} \, \ket{0} \, + \, \alpha_{k,1} \, \ket{1}  \big) $ or  $ \ket{\widetilde{\lambda_{k,2}}} \,  \ket{s_{k,n} \, s_{k, n-1} \, \hdots s_{k,1}} \,  \big( \beta_{k,0} \, \ket{0} \, + \, \beta_{k,1} \, \ket{1}  \big)$.\\
\item Measure the second register except the ancillary qubit to obtain $\ket{s_{k,n} \, s_{k, n-1} \, \hdots s_{k,1}}$.\\
\item Perform the first five steps  $\mathcal{O}(2 \, / \, \min_{k \in [1, K]} \lbrace \vert c_k \vert^2 \rbrace)$ times completing the estimation of $K$ and $\ket{s_{k,n} \, s_{k, n-1} \, \hdots s_{k,1}}$ for $k \in [1, K]$.\\
\end{enumerate}
\textbf{$2^{nd}$ PHASE:}
\begin{enumerate}
\item $\ket{\Psi}$ is estimated by converting the problem to QST problem of  $log(K)$-qubit pure states by using the projection operators   $P_{k}   \equiv \ket{s_{k,n} \, s_{k, n-1} \,  \hdots s_{k,1}} \bra{s_{k,1} \,  \hdots \, s_{k, n-1}  s_{k,n}} $ and $I - P_k$ since the states $\ket{s_{k,n} \, s_{k, n-1} \, \hdots s_{k,1}}$ for $k \in [1, K]$ are obtained in the first phase of the algorithm. Existing quantum CS methods are utilized in this step which require $\mathcal{O}(d\, K \,(log K)^c)$ measurement settings for $c \in [2, 6]$  and constant $d$ while completing the calculation of $c_k \, e^{\imath \, \varphi_k}$ for $k \in [1, K]$.
\end{enumerate}
\end{algorithm}

\begin{proof}[\textbf{Proof of Theorem-1:}]
Assume that there exists a special unitary transform $U_{\vec{\Phi}}$ with  a favorable eigenstructure as described next. In the first phase, the unknown state  $\ket{\Psi}$ is firstly converted to the following by using n-qubit Hadamard transformation and an ancillary qubit to encode as the superposition of the eigenvectors of $U_{\vec{\Phi}}$:
\begin{eqnarray}
\ket{\Psi_e}  \,  & = & \,   \sum_{k = 1}^{K} c_k \, e^{\imath \, \vartheta_k}\,  H^{\otimes n} \,\ket{s_{k,n} \,   \hdots s_{k,1}} \, \bigg(\frac{\ket{0} \, + \, \ket{1}}{\sqrt{2}} \bigg)  \\
   & = & \, \sum_{k = 1}^{K} c_k \, e^{\imath \, \vartheta_k}\, H^{\otimes n} \,\ket{s_{k,n}  \, \hdots s_{k,1}} \,  \bigg( d_{k,1}  \big( \alpha_{k,0} \, \ket{0} \, + \, \alpha_{k,1} \, \ket{1}  \big) + d_{k,2}  \big( \beta_{k,0} \, \ket{0} \, + \, \beta_{k,1} \, \ket{1}  \big) \bigg) \hspace{0.2in}   \\
    & = & \, \sum_{k = 1}^{K} c_k \, e^{\imath \, \vartheta_k}\, \big( d_{k,1}  \ket{E_{k,1}}\, + \, d_{k,2}  \, \ket{E_{k,2}}\big) 
 \end{eqnarray}  
where  $\ket{E_{k,1}}$ and $\ket{E_{k,2}}$ are eigenvectors of $U_{\vec{\Phi}}$, $ \lbrace d_{k, 1}, \,d_{k, 2} \rbrace \in \mathbb{C}$, $\alpha_{k,0} \, \equiv \, a_{k, 0} \, + \, \imath \, b_{k, 0}$, $\alpha_{k,1} = a_{k, 1}$ for  the parameters $ \lbrace a_{k,0}, \, a_{k,1}, \, b_{k,0}  \rbrace \in \mathbb{R}$,  and it is assumed that there is a special relation between the values of $\alpha_{k,j}$ and $\beta_{k,j}$ for $j \in [0, \,1]$, i.e., denoted as duality relation, defined as $\beta_{k,0} =  \, a_{k, 1}$ and $\beta_{k,1} = \, - \, a_{k, 0} \,   + \, \imath \, b_{k, 0}$ (or multiplication of $\beta_{k,0}$ and $\beta_{k,1}$ with arbitrary phase, e.g., $-1$). All the parameters depend on phase shift values in the design of  $U_{\vec{\Phi}}$ where eigenvectors $\ket{E_{k,1}}$ and $\ket{E_{k,2}}$ are defined as follows:
\begin{eqnarray}
\ket{E_{k,1}} \,  & \equiv & \,  H^{\otimes n} \,\ket{s_{k,n}  \, \hdots s_{k,1}} \, \big( \alpha_{k,0} \, \ket{0} \, + \, \alpha_{k,1} \, \ket{1}  \big) \\
\ket{E_{k,2}} \,  & \equiv  &  \,  H^{\otimes n} \,\ket{s_{k,n}  \, \hdots s_{k,1}} \, \big( \beta_{k,0} \, \ket{0} \, + \, \beta_{k,1} \, \ket{1}  \big) 
 \end{eqnarray} 

We assumed that $U_{\vec{\Phi}}$ has a special eigenstructure with the pair of eigenvectors $\ket{E_{k,1}}$ and $\ket{E_{k,2}}$ corresponding to each unique $\ket{s_{k,n} \,  s_{k,n-1} \, \hdots s_{k,1}}$ such that these eigenvectors  are separable pure states and they differ only in terms of the state of the first qubit, i.e., as $\big( \alpha_{k,0} \, \ket{0} \, + \, \alpha_{k,1} \, \ket{1}  \big)$ or $\big( \beta_{k,0} \, \ket{0} \, + \, \beta_{k,1} \, \ket{1}  \big)$. Furthermore, our assumption includes the special duality relation between the values of $\alpha_{k,j}$ and $\beta_{k,j}$ for $j \in [0, \,1]$ as described. In  addition, it is assumed that all the eigenvalues of $U_{\vec{\Phi}}$  are different which will be exploited during phase estimation as described next.
 
\begin{figure}[ht!]
\includegraphics[ width=4.5in]{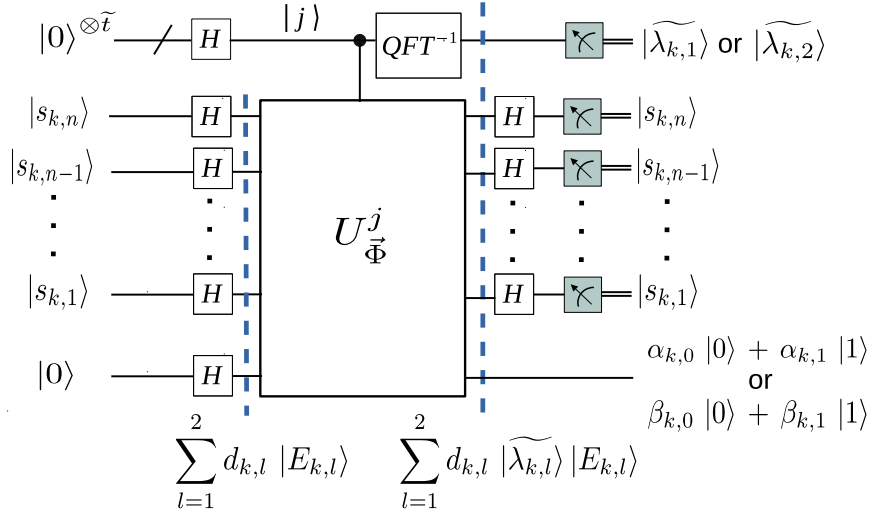}
\caption{Estimation of a single computational basis state $\ket{s_{k,n} \, s_{k, n-1} \, \hdots s_{k,1}} $ by using phase estimation algorithm. Ancillary qubit state $(\ket{0} \, + \, \ket{1})  \, / \,  \sqrt{2} $ encodes $ H^{\otimes n} \,\ket{s_{k,n} \,  s_{k,n-1} \, \hdots s_{k,1}} \, \big( (\ket{0} \, + \, \ket{1})  \, / \,  \sqrt{2} \big) $ as the superposition of two eigenvectors of $U_{\vec{\Phi}}$, i.e., $\ket{E_{k,1}}$ and $\ket{E_{k,2}}$, with  $ H^{\otimes n} \,\ket{s_{k,n} \,  s_{k,n-1} \, \hdots s_{k,1}} \, \big( (\ket{0} \, + \, \ket{1})  \, / \,  \sqrt{2} \big)$ being equal to $\sum_{l =1}^2 d_{k,l} \ket{E_{k,l}}$ where $\ket{E_{k,1}} \,   \equiv   \,  H^{\otimes n} \,\ket{s_{k,n}  \, s_{k,n-1}  \, \hdots s_{k,1}} \, \big( \alpha_{k,0} \, \ket{0} \, + \, \alpha_{k,1} \, \ket{1}  \big)$ and
$ \ket{E_{k,2}} \,   \equiv    \,  H^{\otimes n} \,\ket{s_{k,n}  \,s_{k,n-1}  \, \hdots s_{k,1}} \, \big( \beta_{k,0} \, \ket{0} \, + \, \beta_{k,1} \, \ket{1}  \big)$. In the proposed QST algorithm, the state $\ket{\Psi} =   \sum_{k = 1}^{K}  c_k \, e^{\imath \, \vartheta_k}\, \ket{s_{k,n} \, s_{k, n-1} \, \hdots s_{k,1}}$ is the input as the superposition of the computational basis states while the ancillary qubit encodes each computational basis state $\ket{s_{k,n} \, s_{k, n-1} \, \hdots s_{k,1}} $ as  superpositions of different pairs of eigenvectors. After the phase estimation,  the first register with $\widetilde{t}$ qubits is measured. Then, n-qubit Hadamard transformation is applied to the second register except the ancillary qubit. The state  collapses to either $ \ket{\widetilde{\lambda_{k,1}}} \, \ket{s_{k,n} \, s_{k, n-1} \, \hdots s_{k,1}} \,  \big( \alpha_{k,0} \, \ket{0} \, + \, \alpha_{k,1} \, \ket{1}  \big)  $ or  $ \ket{\widetilde{\lambda_{k,2}}} \,  \ket{s_{k,n} \, s_{k, n-1} \, \hdots s_{k,1}} \,  \big( \beta_{k,0} \, \ket{0} \, + \, \beta_{k,1} \, \ket{1}  \big) $. }
\label{fig5}
\end{figure}

The parameters $ d_{k,1}$ and $d_{k,2}$ with $\vert d_{k,1} \vert^2  \, + \, \vert d_{k,2} \vert^2$ = 1 are calculated easily if  $ \lbrace a_{k,0}, \, a_{k,1}, \, b_{k,0}  \rbrace$ are known. They satisfy the following for $\beta_{k,0} \, = \, a_{k, 1} \, \mbox{and} \,  \beta_{k,1} = - a_{k, 0} \,   + \, \imath \, b_{k, 0}$:
\begin{equation}
 d_{k,1} \, = \, \frac{\beta_{k,0}-\beta_{k,1}}{\sqrt{2}}; \,\,\,\, d_{k,2} \, = \,  
 \frac{\beta_{k,0}+\beta_{k,1}^*  }{\sqrt{2}}  
\end{equation}
We observe that $\vert d_{k,1} \vert^2 = (1/2 \,+ \,a_{k,0} \, a_{k,1} )$ and  $\vert d_{k,2} \vert^2 = (1/2 \,- \,a_{k,0} \, a_{k,1} )$.
Then, phase estimation algorithm with input $\ket{\Psi_e}$ is exploited to estimate the eigenvalues $e^{\imath \, \lambda_{k,1}}$ and $e^{\imath \, \lambda_{k,2}}$ corresponding to each pair of $\ket{E_{k,1}}$ and $\ket{E_{k,2}}$ resulting in the following state: 
\begin{eqnarray}
 \ket{\Psi_o}  \, \equiv  \,  \sum_{k = 1}^{K} c_k \, e^{\imath \, \vartheta_k}\, \big( d_{k,1} \ket{\widetilde{\lambda_{k,1}}} \, \ket{E_{k,1}}\, + \, d_{k,2}  \, \ket{\widetilde{\lambda_{k,2}}} \, \ket{E_{k,2}}\big)   
 \end{eqnarray} 
where $\ket{\widetilde{\lambda_{k,1}}} $ and $\ket{\widetilde{\lambda_{k,2}}} $ are $t$-bit approximations to the exact values of the phases of the eigenvalues which can be realized with a quantum circuit including $ \widetilde{t}  \approx t\,+\,log\big(2 \, + \, \, 1 \, / \, \,(2 \, \epsilon) \big)$ ancillary qubits in the first register as shown in Fig. \ref{fig5} with success probability of at least $(1  \, - \, \epsilon)$  \cite{nielsen2010quantum}.  On the other hand, measurement of $\ket{\Psi_o}$ collapses the state to $\ket{\widetilde{\lambda_{k,l}}} \, \ket{E_{k,l}}$ for $l  \in [1,2]$ with probabilities $\vert c_{k}\vert^2 \, \vert d_{k,1}\vert^2  = \vert c_{k}\vert^2 \, (1/2 \, + \,a_{k,0} \, a_{k,1} )$ and $\vert c_{k}\vert^2 \, \vert d_{k,2}\vert^2  = \vert c_{k}\vert^2 \, (1/2 \, - \,a_{k,0} \, a_{k,1} )$. Applying the n-qubit Hadamard transform to the second register of the resulting state, i.e., $\ket{E_{k,l}}$ for $l  \in [1,2]$, except the ancillary qubit results in either $  \ket{s_{k,n} \, s_{k, n-1} \, \hdots s_{k,1}} \, \big( \alpha_{k,0} \, \ket{0} \, + \, \alpha_{k,1} \, \ket{1}  \big) $ or  $  \ket{s_{k,n} \, s_{k, n-1} \, \hdots s_{k,1}} \, \big( \beta_{k,0} \, \ket{0} \, + \, \beta_{k,1} \, \ket{1}  \big)$ in the second register. Then, measuring the unknown qubits gives the values of $s_{k,j}$ for $j \in [1, n]$. Another important assumption is that all the eigenvalues of $U_{\vec{\Phi}}$  are different so that there will be one-to-one mapping between each $\ket{s_{k,n} \, \, s_{k, n-1} \,  \hdots s_{k,1}}$ and the specific eigenvalues $\ket{\widetilde{\lambda_{k,1}}} $ and $\ket{\widetilde{\lambda_{k,2}}}$ so that the measurement restores the corresponding  $\ket{s_{k,n} \, s_{k, n-1} \, \hdots s_{k,1}}$. In order to estimate all values for $k \in [1,K]$, we apply standard phase estimation algorithm shown in Fig. \ref{fig5} as much as $\mathcal{O}(1 \, / \, m)$ times  with the designed operator $U_{\vec{\Phi}}$ where $m$ is chosen as follows:
$$
m = \min_{k} \big \lbrace \max \lbrace \vert c_{k} \, \vert^2 \, \vert d_{k,1}\vert^2, \, \vert c_{k}\vert^2   \, \vert d_{k,2}\vert^2 \rbrace \big \rbrace = \min_{k} \big \lbrace \vert c_{k}\vert^2 \, \max \lbrace  \big(\frac{1}{2} \, + \,a_{k,0} \, a_{k,1} \big), \,   \big(\frac{1}{2} \, - \,a_{k,0} \, a_{k,1} \big)  \rbrace \big \rbrace
$$
for detecting one of the pairs of eigenvalues for each $\ket{s_{k,n} \, s_{k, n-1} \, \hdots s_{k,1}}$ where the one with higher $\vert d_{k,l}\vert^2$ has higher probability for $l \in [1, \,2]$. Since  $ \big(1 \, / \, 2 \, + \,a_{k,0} \, a_{k,1} \big) $ and $ \big(1 \, / \, 2 \, - \,a_{k,0} \, a_{k,1} \big)$ are varying in opposite directions as the multiplication $a_{k,0} \, a_{k,1}$ varies, it creates a balanced effect on the number of measurements practically independent from the number of qubits. In other words, $\max \lbrace  \big(1 \, / \, 2 \, + \,a_{k,0} \, a_{k,1} \big), \,   \big(1 \, / \, 2 \, - \,a_{k,0} \, a_{k,1} \big)  \rbrace $ is larger than $1 \, / \, 2$. Therefore,  the number of  phase estimation steps is $\mathcal{O}(2 \, / \, \min_{k \in [1, K]} \lbrace \vert c_k \vert^2 \rbrace)$ independent from the number of data qubits while depending on the probability $\vert c_k \vert^2$  of the least probable state in the superposition for $k \in [1, K]$.

Then, detected unknown basis states and the phases of eigenvalues are paired as the list $\lbrace  \widetilde{\lambda_{k,l}}, \, \ket{s_{k,n} \, \, s_{k, n-1} \, \hdots s_{k,1}}\rbrace$ by choosing the first detected $l \in [1,2]$ for  $\ket{s_{k,n} \, s_{k, n-1} \, \hdots s_{k,1}}$. Therefore,  the first phase of the algorithm is completed with estimation of $K$ and $\ket{s_{k,n} \, s_{k, n-1} \, \hdots s_{k,1}}$ for $k \in [1, K]$. Therefore, by combining the unique eigenvector structure of $U_{\vec{\Phi}}$ with the power of phase estimation, we can easily estimate the number of superposition components and unknown basis states $\ket{s_{k,n} \, s_{k, n-1} \, \hdots s_{k,1}}$ with quantum polynomial-time resources. 

On the other hand, observe that phase estimation algorithm uses black-boxes, i.e., controlled-$U_{\vec{\Phi}}^{2^j}$  for $j \in [0, \, \widetilde{t} \, - \,1]$, so that we assume that there is a polynomial size quantum circuit implementing the exponentially large powers of  controlled-$U_{\vec{\Phi}}$ operations. This is an important open issue for practical implementation and for clarifying theoretical bounds of the complexity for reconstructing $K$-sparse pure states by exploiting quantum computation, i.e., phase estimation.

In the second phase of the algorithm, $c_k \, e^{\imath \, \varphi_k}$ is estimated for $k \in [1, K]$. Since each state $\ket{s_{k,n} \, s_{k, n-1} \, \hdots s_{k,1}}$ is known, projection of $\ket{\Psi} $ onto $P_{k}\,\equiv \, \ket{0_k}\bra{0_k}$ and $P_{k}^{\perp} \, \equiv \, I \, - \, P_{k} \equiv \ket{1_k}\bra{1_k}$ converts the problem to QST for pure states of $log(K)$ qubits where $P_{k}$ is defined as follows:
\begin{equation}
 P_{k}    \equiv \ket{s_{k,n} \, s_{k, n-1} \, \hdots s_{k,1}} \bra{s_{k,1} \,  \hdots \, s_{k, n-1} \,  s_{k,n}} 
\end{equation}
The existing quantum CS methods in \cite{gross2010quantum, kyrillidis2018provable} can be utilized for  $\mathcal{O}(K \,(log K)^c)$ measurement settings for $c \in [2, 6]$ by estimating $c_k \, e^{\imath \, \varphi_k}$ for $k \in [1, K]$ with high accuracy and low error probability. In fact, as shown in \cite{gross2010quantum}, $\mathcal{O}(d\, K \,(log K)^c)$ measurement settings reduce the error exponentially with some parameter $d$ by using certified tomography for pure states.

As a result, the state is fully reconstructed  after $\mathcal{O}(2 \, / \, \min_{k \in [1, K]} \lbrace \vert c_k \vert^2 \rbrace)$ repetitions of conventional $t$-bit  quantum phase estimation algorithm to estimate $K$ and $\ket{s_{k,n} \, s_{k, n-1} \, \hdots s_{k,1}}$ for $k \in [1,\, K]$, and then $\mathcal{O}(d \, K \,(log K)^c)$  measurement settings with constant $c$ and $d$ for quantum CS to estimate $c_k \, e^{\imath \, \varphi_k}$ for $k \in [1,K]$. This completes the proof.
\end{proof}

Next, the design of the optical set-up composed of linear optical elements and WPDs, and its quantum circuit implementation are provided having the unique and favorable eigenstructure.

\section{Design of Linear Optical Set-up Realizing the Unitary Operator $U_{\vec{\Phi}}$}
\label{sec3}

A novel linear optical set-up motivated for tracking the path of a single photon through consecutive BSs and phase shifters by using consecutive WPDs surprisingly provides an interesting relation among the eigenvectors. WPDs are utilized to track the history of a single photon state in various optical architectures for exploring quantum mechanical fundamentals \cite{englert1996fringe}. Multi-WPD implementation is shown in Fig. \ref{fig1} motivating the designed set-up in this article. Single photon diffraction paths   are tracked with the help of WPDs. The analogical linear optical circuit and its combination with WPDs are realized by replacing planes with BSs, path length differences with phase shifters and the effects of WPDs with CNOT gates as shown in Fig. \ref{fig2}(a). Quantum circuit modeling for symmetric BSs is shown in Fig. \ref{fig2}(b) as described in the next section.  

\begin{figure*}[t!]
\includegraphics[width=4in]{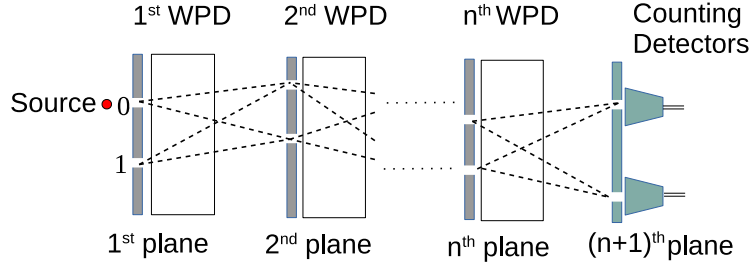} 
\caption{Multi-plane evolution of diffracting single photon tracked with WPDs motivating the designed linear optical set-up.}
\label{fig1}
\end{figure*}

\begin{figure*}[t!]
\includegraphics[width=6in]{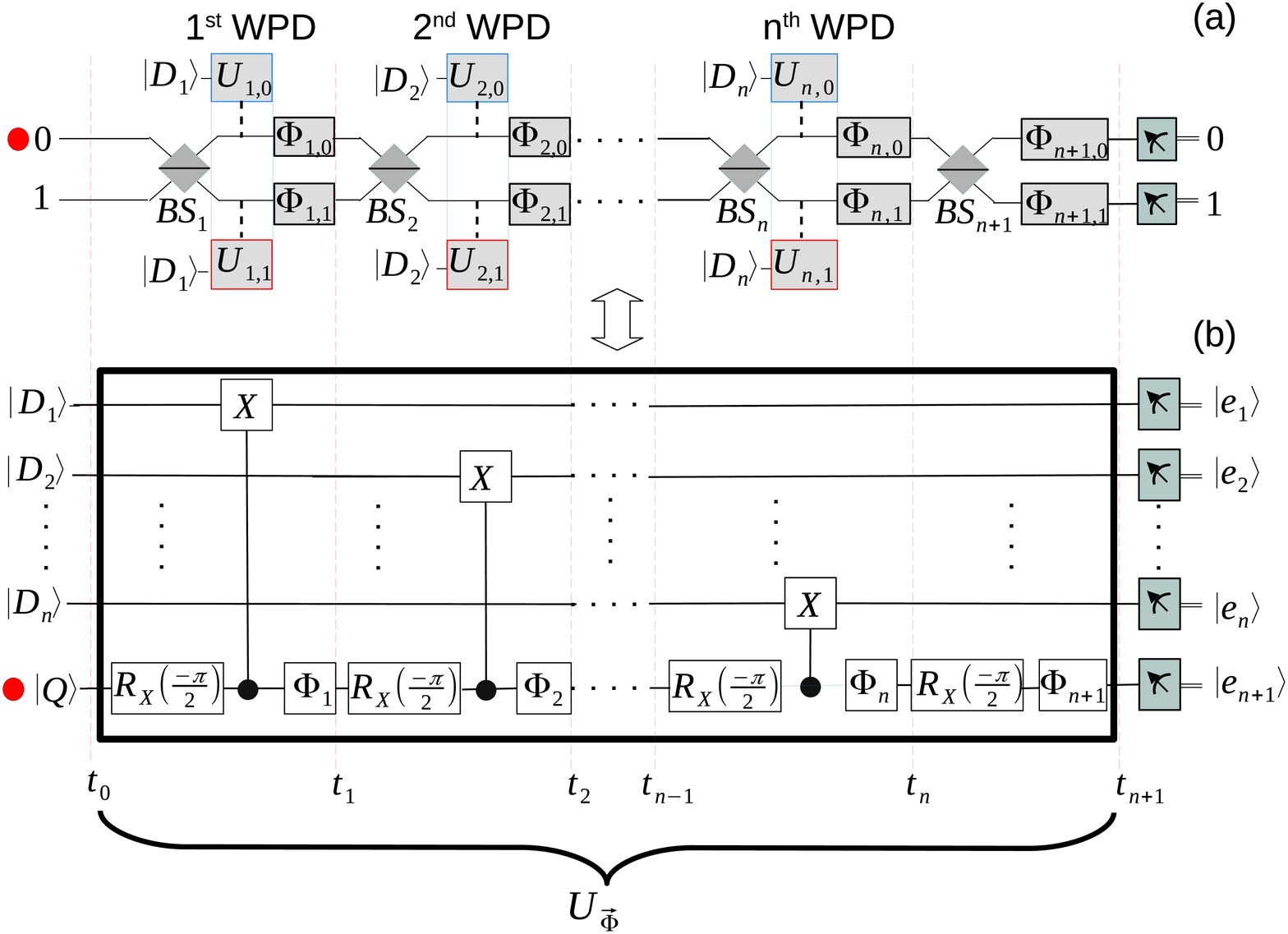} 
\caption{(a) Linear optical set-up and WPDs for realizing the targeted unitary operator $U_{\vec{\Phi}}$ where the set-up includes consecutive $n+1$ BSs, phase shifters and WPDs tracking the paths of the single photon input. $\ket{D_k}$ for $k \in [1,n]$ shows the initial states of WPDs while unitary gates $U_{k,0}$ or $U_{k,1}$ are applied on  $\ket{D_k}$ depending on the path selected by the photon. (b) Quantum circuit implementation of the set-up for symmetric BSs. $\ket{Q}$ shows the initial state of the single photon. The proposed quantum circuit realizes a unitary operator denoted with $U_{\vec{\Phi}}$.}
\label{fig2}
\end{figure*}

The paths obtained with consecutive symmetric BSs and corresponding phase shifters result in a linear optical set-up where the single photon is tracked by using consecutive WPDs with Hilbert space size of the circuit being $2^{n+1}$. It is shown in Appendix \ref{appA} that the final states of WPDs and the single photon for any initial basis state of WPDs are classically simulable. 
The final states are also derived in Appendix \ref{appA}. Defining the final state of $k$th WPD as $ \ket{d_j}_{k} \equiv U_{k, j} \ket{D_k}$ for $j \in [0, 1]$, the final state of the photon as $\ket{j_{n+1}}_{n+1}$, the combined state of the photon and WPDs as $\ket{j} \, \equiv \, \, \ket{j_{n+1}}_{n+1}\, \ket{d_{j_n}}_n \, \hdots \, \ket{d_{j_1}}_1 $ and $A(j) \equiv 
K_{j_1} \prod_{k=1}^{n} \chi_{k+1, j_{k}, j_{k+1}}$ with decimal value of $ j \equiv j_{n+1}\, 2^n \, + \, \sum_{k=1}^n d_{j_k} \, 2^{k-1}$ allows to simplify  $\ket{\Psi_{n+1}}$ as  
\begin{equation}
\ket{\Psi_{n+1}} = \sum_{j = 0}^{2^{n+1}-1} A(j) \, \ket{j} = \sum_{j = 0}^{2^{n+1}-1} \bigg( K_{j_1} \prod_{k=1}^{n} \chi_{k+1, j_{k}, j_{k+1}} \bigg) \ket{j} 
\end{equation}
where the parameters $K_{j_1}$ and $\chi_{k+1, j_{k}, j_{k+1}}$ composing $A(j)$  are defined in Appendix \ref{appA}. It is observed that the final state is an entangled state due to the non-separable structure of $A(j)$. The factor $A(j)$ chains  neighbour WPD states  $\ket{d_{j_k}}_k$ and $\ket{d_{j_{k+1}}}_{k+1}$ with the factor $\chi_{k+1, j_{k}, j_{k+1}}$ depending on both the paths experienced by the photon propagating through each WPD. On the other hand, each $A(j)$ is classically calculated with $\mathcal{O}(n)$ complex calculations for any $\ket{j}$. 

Next,  quantum circuit implementation of the proposed optical design is provided for simulating in recent quantum computers or NISQ devices. Quantum circuit implementation provides an easier analytical framework to explore the proposed  algorithms.
 
\section{Quantum Circuit Implementation of the Unitary Operator $U_{\vec{\Phi}}$ and Complex Hadamard Matrix Representation}
\label{sec4}
 
The optical set-up composed of multi-WPDs can be formulated and simulated with quantum circuits by modeling BSs with rotation gates \cite{amico2020simulation}. First of all, BS operator is simplified to create equal propagation probability through the paths with $\theta_k = \pi \, / \, 4$ for $k \in [1, n+1]$ resulting in the path splitting operation of $R_{X}(-\pi/2) \equiv  \dfrac{1}{\sqrt{2}}\begin{bmatrix}1 & \imath\\ \imath & 1\end{bmatrix}$. It is equal to the rotation of $-\pi \, / \, 2$ around X-axis in Bloch-sphere. It is more comparable to BS operation as compared with a Hadamard gate. Secondly, the phase shift in the upper path of each BS is set to zero with $\phi_{k,0} = 0$ for $k \in [1, n+1]$ since the difference between the phase shifts between upper and lower paths can be reflected as a global phase shift on the final quantum state.  It results in the phase shift operators defined as $\Phi_{k} \, \equiv   \, \begin{bmatrix}
1 &   0 \\
0 & e^{\imath \, \phi_{k}}\\
\end{bmatrix}$ for $k \in [1, n +1]$. Thirdly, the unitary operator on the upper path is chosen as Identity operator $I \equiv   \begin{bmatrix}1 & 0\\ 0 & 1\end{bmatrix}$ while the lower path is set to the Pauli-X operator $X \equiv   \begin{bmatrix}0 & 1\\ 1 & 0\end{bmatrix}$.  In other words, as the photon passes through the WPD, it applies a CNOT gate to the current state of the WPD. For example, if the initial state of the $k$th WPD is denoted as $\ket{D_k} = \ket{0}_k$ and the photon state after the $k$th BS is equal to $\alpha  \, \ket{0} + \, \beta \, \ket{1}$, then the entangled state of WPD and photon becomes $\alpha  \, \ket{0} \ket{0}_k + \, \beta  \, \ket{1}  \ket{1}_k$. If the initial state of the $k$th WPD is $\ket{1}_k$, then it produces $\alpha  \, \ket{0} \ket{1}_k + \, \beta  \, \ket{1}  \ket{0}_k$. Finally, we does not constrain the initial state of the photon  as $\ket{0}$ but a complex superposition of $\ket{0}$ and $\ket{1}$, i.e.,  $\alpha_0 \, \ket{0} \, + \, \alpha_{1} \, \ket{1}$, is possible. 

The quantum circuit implementation of the linear optical set-up is shown in Fig. \ref{fig2}(b). It results in the unitary operator denoted with $U_{\vec{\Phi}}$  in the family of complex Hadamard matrices  which are composed  of unit magnitude entries with arbitrary phases \cite{tadej2006concise}. These matrices are important for quantum information theory and computing with the famous example of quantum Fourier transform (QFT). In this article, a novel method is proposed to realize a specific form of complex Hadamard matrices. The formation process is analyzed by calculating the elements of $U_{\vec{\Phi}}$ for the circuit composed of phase shift gates of $\vec{\Phi} \equiv [\phi_1 \, \phi_2 \hdots \phi_{n+1} ]$. Assume that the computational basis state for the measured photon state is denoted with $\ket{e_{n+1}}$ where   $e_{n+1}$ denotes $0$ or $1$ and similarly with $\ket{d_{k}}$ (removing the symbol $k$ under the ket for simplicity) where $d_k$ denotes $k$th detector state output. Then, the elements of $U_{\vec{\Phi}}$ denoted with $U_{\vec{\Phi}}(k,l)$ are calculated as follows:
\begin{eqnarray}
U_{\vec{\Phi}}(k,l) & \equiv & \bra{e_{1}} \bra{e_{2}} \hdots \bra{e_{n+1}} \, U_{\vec{\Phi}} \, \ket{Q} \,\ket{D_n} \hdots \ket{D_{1}}  \\
& \equiv & \bra{e_{1}} \bra{e_{2}} \hdots \braket{e_{n+1} \vert \Psi_{n+1}}  
\end{eqnarray}
where $  \ket{Q} \, \ket{D_n} \hdots \ket{D_{1}}  $ denotes the initial state for the photon with $\ket{Q}$  and $k$th WPD with $\ket{D_k} $ in the computational basis states while $k = \sum_{j=1}^{n+1} e_{j} \, 2^{j-1}$ and  $l = Q\, 2^n \, + \, \sum_{j=1}^{n} D_{j} \, 2^{j-1}$.

\begin{figure*}[t!]
\includegraphics[width=6.5in]{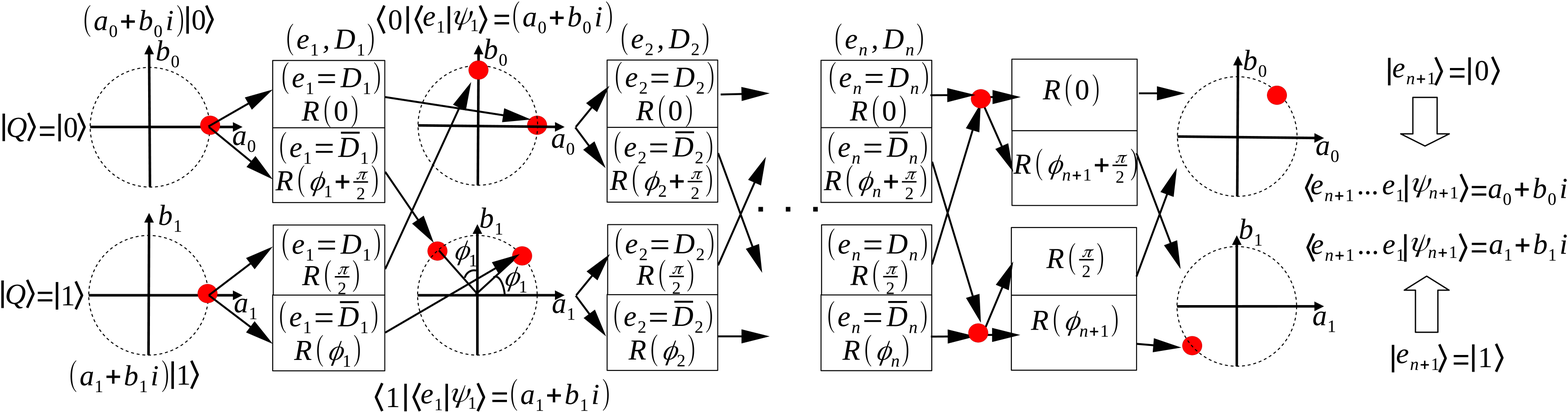}
\caption{Iterative approach to calculate $U_{\vec{\Phi}}(k,l)  \equiv   \bra{e_{1}} \bra{e_{2}} \hdots \bra{e_{n+1}} \, U_{\vec{\Phi}} \, \ket{Q} \,\ket{D_n} \hdots \ket{D_{1}}$ resulting in a special form of complex Hadamard matrix representation.}
\label{fig3}
\end{figure*}

Tracing the  quantum state,  $U_{\vec{\Phi}}(k,l)$ is calculated in classical polynomial-time in an iterative manner as described next and as shown in Fig. \ref{fig3}.   At each time step, it is assumed that the normalized unit amplitude of the photon basis state  $\ket{0}$ entangled with WPD states is denoted as $a_0 \, + \, \imath \, b_0 $ while the one for $\ket{1}$ is denoted with $a_1 \, + \, \imath \, b_1 $.  Before passing to the next step, the information about the measured state  $\ket{e_j}$ allows us to choose either $\ket{D_j}$ or $\ket{\overline{D_j}}$ for $j \in [1, n]$ to continue with in the calculation of $U_{\vec{\Phi}}(k,l)$. This operation  at the same time measures the photon state to either $\ket{0}$ or $\ket{1}$ due to entanglement with the $j$th WPD state. Therefore, at each time step, we are left with the photon state   being either $\ket{0}$ or $\ket{1}$ with the amplitudes $(a_0 \, + \, \imath \, b_0 )$ and $(a_1 \, + \, \imath \, b_1 )$. We need to track only these amplitudes depending on the pairs $D_j$ and $e_j$ for $j$th  WPD state. An iterative approach allows to obtain $U_{\vec{\Phi}}(k,l)$ with the algorithmic evaluation framework shown in Fig. \ref{fig3} where $R(\alpha) \equiv \begin{bmatrix} \cos(\alpha) & -\sin(\alpha)\\ \sin(\alpha) & \cos(\alpha)\end{bmatrix}$ denotes rotation in the counter-clockwise direction of a two dimensional vector composed of the real and imaginary components of complex amplitude of some number $z$, i.e., $z  = a_0 \, + \,  \imath \, b_0  $ and $R(\alpha) \, \begin{bmatrix}  a_0   \\ b_0 \end{bmatrix} $. After the final $(n+1)^{th}$ step, there are two possible amplitudes for the photon state which is chosen depending on $\ket{e_{n+1}}$.

For example,  at time $t_1$, $\ket{\Psi_1}$ is as follows where we reorder the position of $\ket{Q}$ to emphasize the entanglement with the first WPD state depending on the path:
\begin{eqnarray}
 \ket{\Psi_1} =  \frac{1}{2^{1/2}} \, &&   \, \ket{D_n \, \hdots \,  D_3 \,D_2} \, \otimes  \, \bigg( (a_0 \, + \, \imath \, b_0 )\,   \ket{0} \ket{D_1}   \, + \, (a_1 \, + \, \imath \, b_1 )\,  \ket{1} \ket{\overline{D_1}} \bigg) 
 \end{eqnarray} 
and the the following is obtained based on the first pair of $e_1$ and $D_1$ values:
\begin{eqnarray}
\braket{e_1 \vert \Psi_1} =  \frac{1}{2^{1/2}} \,  \, \ket{D_n \, \hdots \,  D_3 \,D_2} \, \otimes  \, \bigg( (a_0 \, + \, \imath \, b_0 )\,   \ket{0} \braket{e_1 \vert D_1}  \, + \, (a_1 \, + \, \imath \, b_1 )\,   \ket{1} \braket{e_1 \vert \overline{D_1}} \bigg)   
 \end{eqnarray} 
where $\overline{D_1}$ denotes the NOT of $D_1$ while $(a_0 \, + \, \imath \, b_0 )$ and $(a_1 \, + \, \imath \, b_1 )$ are calculated by   $R_X(-\pi/2)\, \ket{Q}$ and consecutive phase shift $\phi_1$ depending on the value of $Q$ and $e_1$. Therefore, if $e_1 = D_1$, the amplitude passed on to the next step is $(a_0 \, + \, \imath \, b_0 )$ with $\ket{0}$ while if $e_1 = \overline{D_1}$, the amplitude passed on to the next step is $(a_1 \, + \, \imath \, b_1 )$ with $\ket{1}$.  In fact, $R_{X}(-\pi \,/ \, 2) $ transforms $\ket{0}$ into $(\ket{0} \, + \, e^{\imath \, \pi \, / \, 2}\, \ket{1}) \,/ \, \sqrt{2}$ and $\ket{1}$ into $(e^{\imath \, \pi \, / \, 2} \, \ket{0} \, + \, \ket{1}) \,/ \, \sqrt{2}$ where the effect of complex $\imath$ is modeled as $\pi \, / \, 2$ rotation in the counter-clockwise direction on the complex amplitudes of $\ket{0}$ or $\ket{1}$, i.e., $R(\pi \, / \, 2)$. Therefore, if  initially $Q \, = \, 0$, i.e., $a_0 \, =  \, 1$  and $b_0 \, = \, 0$, and $e_1 \, = \, D_1$, then $R(0)$ or no change is applied on the state resulting in $a_0 \, = \,1$, $b_0 \, =  \, 0$, $a_1  \, = \, 0$, $b_1 \, = \, 0$. If initially $Q \, = \, 0$ and $e_1 \, = \, \overline{D_1}$, then both the phase factors $e^{\imath \, \pi \, / \, 2}$ and $e^{\imath \, \phi_1}$ will multiply $\ket{1}$, and the result becomes   $a_1 \,  +\, \imath \, b_1 = \, R( \phi_1 \, + \, \pi \, / \, 2)\,(a_{0} \, + \, \imath \, b_0)$ where $a_0 \, =  \, 1$  and $b_0 \, = \, 0$.  Similarly, iterative approach allows to track the phases of $\ket{0}$ and $\ket{1}$ until to $(n+1)^{\mbox{th}}$  step. In this step, $R(0)$ and $R(\phi_{n+1} \, + \, \pi / \, 2)$ are applied on $(a_0 \, + \, \imath \, b_0)$ coming from the previous step in order to calculate the amplitudes of $\ket{0}$ and $\ket{1}$ for the final step, i.e., $(a_{0} \, + \, \imath \, b_0)$ and  $ (a_1 \, + \, \imath \, b_1 )$, respectively. Similarly, $R(\pi \, / \, 2)$ and $R(\phi_{n+1})$ are applied on $(a_1 \, + \, \imath \, b_1)$ coming from the previous step  to calculate the amplitudes of $\ket{0}$ and $\ket{1}$.

The iterative approach provides a classical polynomial-time complexity formulation for the calculation of $U_{\vec{\Phi}}(k,l)$ as follows:
 \begin{equation}
U_{\vec{\Phi}}(k,l) = \frac{1}{2^{(n+1)/2}} \, \vec{u}_{e_{n+1}}^T  \, \bigg( \prod_{j = 1}^{n}\mathbf{M}_{j, D_j, e_{j}} \bigg)   \,  \vec{v}_{Q}
 \end{equation}
where $\prod_{j = 1}^{n} \mathbf{A}_{j}$ denotes the matrix product $\mathbf{A}_{n} \, \mathbf{A}_{n-1}\, \hdots \, \mathbf{A}_{1}$ from the right to the left (with the same notation in the following discussions), $k = \sum_{j=1}^{n+1} e_{j} \, 2^{j-1}$ and  $l = Q\, 2^n \, + \, \sum_{j=1}^{n} D_{j} \, 2^{j-1}$, and the following are defined:
\begin{equation}
\label{eq6and7}
    \vec{v}_{l}= 
\begin{cases}
        \begin{bmatrix}
1 &   0  &   0  &   0
\end{bmatrix}^T,               & \text{if } l \, = \, 0  \\
            \begin{bmatrix}
0 &   0  &   1  &   0
\end{bmatrix}^T,              & \text{if }  l \, = \, 1  \\
\end{cases}; \hspace{1in}
    \vec{u}_{l}= 
\begin{cases}
        \begin{bmatrix}
1 &   \imath  &   0  &   0
\end{bmatrix}^T,               & \text{if } l \, = \, 0  \\
            \begin{bmatrix}
0 &   0  &   1  &   \imath
\end{bmatrix}^T,              & \text{if }  l \, = \, 1  \\
\end{cases}
\end{equation}
\begin{equation}
\label{eq5}
 \mathbf{M}_{j, D_j, e_{j}}= 
\begin{cases}
        \begin{bmatrix}
\mathbf{I}_2  &   R(\frac{\pi}{2}) \\
\mathbf{0}_2  &   \mathbf{0}_2
\end{bmatrix} ,               & \text{if } e_{j} \, = \, D_j  \\ \\
                    \begin{bmatrix}
\mathbf{0}_2  &   \mathbf{0}_2  \\
R(\phi_j \, + \, \frac{\pi}{2} ) &   R(\phi_j)
\end{bmatrix} ,              & \text{if }   e_{j} \, = \, \overline{D}_j  \\
\end{cases}
\end{equation}
where $\mathbf{M}_{j, D_j, e_{j}}$ for $j \in [1,n-1]$ is $4 \times 4$ block diagonal matrix composed of the rotation matrix $R(\alpha) \equiv  \begin{bmatrix}
\cos(\alpha) &  -\sin(\alpha) \\
 \sin(\alpha) & \cos(\alpha)\\
\end{bmatrix} $, $\mathbf{I}_2 \equiv  \begin{bmatrix}
1 &  0 \\
0 & 1\\
\end{bmatrix} $ and $\mathbf{0}_2 \equiv  \begin{bmatrix}
0 &  0 \\
0 & 0\\
\end{bmatrix} $ and $\mathbf{M}_{n, D_n, e_{n}}$ is defined as follows:
\begin{equation}
\mathbf{M}_{n, D_n, e_{n}} = 
\begin{cases}
    \begin{bmatrix}
\mathbf{I}_2  &   \mathbf{0}_2 \\
R(\phi_{n+1} \, + \, \frac{\pi}{2} )   &   \mathbf{0}_2
\end{bmatrix}  \,     \begin{bmatrix}
\mathbf{I}_2  &   R(\frac{\pi}{2}) \\
\mathbf{0}_2  &   \mathbf{0}_2
\end{bmatrix} ,               & \text{if } e_{n} \, = \, D_n  \\ \\
                      \begin{bmatrix}
 \mathbf{0}_2 & R(\frac{\pi}{2})  \\
   \mathbf{0}_2 & R(\phi_{n+1} )   &
\end{bmatrix}  \,   \begin{bmatrix}
\mathbf{0}_2  &   \mathbf{0}_2  \\
R(\phi_n \, + \, \frac{\pi}{2} ) &   R(\phi_n)
\end{bmatrix} ,              & \text{if }   e_{n} \, = \, \overline{D}_n  \\
\end{cases}
\end{equation}
 
As a result, unit magnitude output amplitude for any input and output basis states is easily calculated with $\mathcal{O}(n)$ multiplications of complex $4 \times 4$ matrices.

\section{Eigenstructure of the Designed Unitary Operator $U_{\vec{\Phi}}$}
\label{sec5}

We order the input qubits in the reverse manner with respect to the ordering of WPDs, i.e., the bottom qubit which is the photon input state to BS is denoted with $\ket{j_0} \equiv \ket{Q}$, the initial state of the $n$th WPD as $\ket{j_1} \equiv \ket{D_n}$, and similarly for $\ket{j_{n+1-l}} \equiv \ket{D_l}$ for $l \in [1, n]$. The proposed re-ordering of the qubits is shown in Fig. \ref{fig4}. The same ordering of qubits should be kept in the output with output qubit numbers starting from $\ket{k_{0}} \equiv \ket{e_{n+1}}$ for the photon output state and $\ket{k_{n+1-l}} \equiv \ket{e_l}$  for $l \in [1, n]$.  Therefore, the elements of the unitary operator in this new re-ordered qubit basis states are calculated as follows:
\begin{eqnarray}
\label{eq1and2}
\braket{k \,\vert \,U_{\vec{\Phi}} \, \vert \, j} \,  &\equiv &\, \bra{k_{0} \,   \hdots  k_{n}}   \, U_{\vec{\Phi}} \,  \ket{j_n  \hdots     j_0}  
   =   \, \frac{\vec{u}_{k_0}^T}{2^{(n+1)/2}} \,   \bigg( \prod_{l = 1}^{n}\mathbf{M}_{l, j_{n+1-l}, k_{n+1-l}} \bigg) \,  \vec{v}_{j_0}\hspace{0.2in}
 \end{eqnarray}
where $k = \sum_{l=0}^{n} k_l \, 2^{l}$ and  $j = \sum_{l=0}^{n} j_l \, 2^{l}$. Next, the analysis and supporting evidence of Conjecture-1 are presented by showing the favorable form of the eigenvectors.

\begin{figure}[t!]
\includegraphics[width=2.5in]{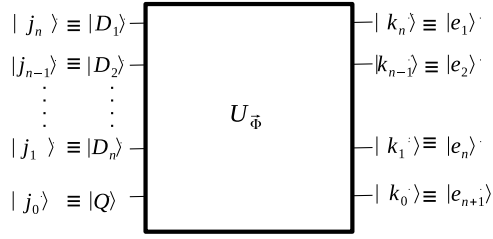}
\caption{Indexing of inputs and outputs for the quantum circuit implementation of $U_{\vec{\Phi}}$.}
\label{fig4}
\end{figure}

\subsection{Analysis and Supporting Evidence of Conjecture-1}
\label{sec5a}
Assume that input to $U_{\vec{\Phi}}$ is a state $\ket{E}_{m,\vec{s}, \vec{\alpha}}$ defined as follows:
\begin{eqnarray}
\label{eq8and9}
\ket{E}_{m,\vec{s}, \vec{\alpha}} &  \, \equiv  \, & \sum_{b=0}^{T} \, \sum_{j=0}^{M}  B_{j} \ket{j+ b\, 2^m} \\
\label{eq10}
 & \, \equiv \, &  H^{\otimes n} \ket{0 \, \hdots \, 0 \,s_{m-1} \,s_{m-2} \hdots \,s_1} (\alpha_{0} \, \ket{0}  \, + \, \alpha_{1} \, \ket{1}) \\
 \label{eq11}
& \, = \, &  \frac{1}{(\sqrt{2})^n} \, \sum_{b_{n-m}, \, \hdots, \, b_0 }  \,   \ket{b_{n-m} \hdots b_0} \,  \sum_{j_{m-1}, \, \hdots, \, j_1}    e^{\imath \, \pi \, \sum_{l=1}^{m-1} j_l \, s_l }   \ket{j_{m-1} \hdots j_1}  \sum_{j_{0}=0}^{1}  \alpha_{j_0} \, \ket{j_0}  \hspace{0.4in}
\end{eqnarray}
where the state is a separable (not entangled) state with some periodic amplitude $B_j$ of period $2^m$, $\vec{\alpha} \equiv [   \alpha_0 \,\, \alpha_1]^T$, $s_{l}$ being either $0$ or $1$ for $l \in [1, m \,- \,2]$, $s_{m-1} \,= \,1$, $M \equiv 2^{m}-1$, $T \equiv 2^{n \, - \, m \, + \, 1}-1$, $b  \,\equiv  \, \sum_{l=0}^{n-m} b_l \, 2^{l}$, $j  \,\equiv  \, \sum_{l=0}^{m-1} j_l \, 2^{l}$  and $(\alpha_0 \, \ket{0} + \alpha_1 \ket{1} )$ is the initial state of the photon. We denote $\vec{s} \equiv \left[s_{1} \, \, \hdots  \, \, s_{m-1}\right]$ as the vector of the input state. The value $s_{m-1} = 1$  is chosen in order to realize periodicity of the amplitudes $B_j$ of the basis states with $2^m$ since  the value of $s_{m-1} = 0$ will increase the number of bits in $\ket{b_{n-m} \hdots b_0}$ by one while reducing the periodicity to  $2^{m-1}$. Therefore, there are $2^{m-2}$ possible $\vec{s}$ vectors with varying binary values of $s_{l}$ for $l \, \in [1, m-2]$. In fact, the eigenvector amplitudes are discrete periodic vectors having periodicity $2^m$ for $m \in [1, n+1]$ by also including the periodicity of $2$ for all zero input vector. 
 
We target to calculate $ \braket{ k + t \, 2^m \vert   U_{\vec{\Phi}}  \, \vert E_{m,\vec{s}, \vec{\alpha}}  }$ for $k \in [0, M]$ and $t \in [0, T]$ by using (\ref{eq1and2}). Firstly, if $\ket{\overline{j}} \equiv \ket{j+ b\, 2^m}$ is defined, then it is easy to observe that $\overline{j}_l = j_l$ for $l \in [0, m-1]$ and $\overline{j}_l = b_{l-m}$ for $l \in [m, n]$ as shown in (\ref{eq8and9}-\ref{eq11}). Similarly, if $\ket{\overline{k}} \equiv \ket{k+ t\, 2^m}$ is defined, then  $\overline{k}_l = k_l$ for $l \in [0, m-1]$ and $\overline{k}_l = t_{l-m}$ for $l \in [m, n]$. Then, the following is obtained by using  (\ref{eq1and2}-\ref{eq11}):
\begin{eqnarray}
 \braket{ k + t \, 2^m \, \vert  \, U_{\vec{\Phi}}  \, \vert E_{m,\vec{s}, \vec{\alpha}} }  \equiv  
\frac{ \vec{u}_{k_0}^T}{2^{n} \, \sqrt{2}} \, \bigg(\prod_{l=1}^{m-1} \mathbf{\widetilde{M}}_{m-l, s_{m-l}, k_{m-l}} \bigg) 
\mathbf{K}_{n-m+1} \, \big(\alpha_0 \, \vec{v}_{0} \, + \, \alpha_1 \, \vec{v}_{1} \big) && 
 \end{eqnarray}
where the product matrix $\mathbf{K}_{a}$ becoming independent of $t$ is defined as follows: 
\begin{eqnarray}
\mathbf{K}_{a}  & \equiv &  \sum_{b_0=0}^{1}   \hdots    \sum_{b_{a-1}=0}^{1} \mathbf{M}_{a, b_{0}, t_{0}} \hdots \, \mathbf{M}_{2, b_{a-2}, t_{a-2}} \,  \mathbf{M}_{1, b_{a-1}, t_{a-1}} \\
& = & \prod_{l=1}^{a} \bigg( \begin{bmatrix}
\mathbf{I}_2  &   R(\frac{\pi}{2}) \\
\mathbf{0}_2  &   \mathbf{0}_2
\end{bmatrix}   +
                    \begin{bmatrix}
\mathbf{0}_2  &   \mathbf{0}_2  \\
R(\phi_l \, + \, \frac{\pi}{2} ) &   R(\phi_l)
\end{bmatrix} \bigg)
 \end{eqnarray}
and $ \mathbf{\widetilde{M}}_{a, s_a, k_a} \equiv e^{\imath \, \pi \, k_{a}\, s_{a}} \,  \mathbf{\widehat{M}}_{{a}, s_{a}}$ and  $\mathbf{\widehat{M}}_{a, s_{a}}    \equiv   \mathbf{M}_{n-a+1, 0, 0} \,+ \,  e^{\imath \, \pi \, s_{a}}  \,  \mathbf{M}_{n-a+1, 1, 0}$. Then, the resulting simplified expression is obtained:
 \begin{eqnarray}
 \label{eq13}
 \braket{ k + t \, 2^m \, \vert  \, U_{\vec{\Phi}} \, \vert  E_{m,\vec{s}, \vec{\alpha}} }  \, =  \, e^{\imath \, \pi \, \sum_{l=1}^{m-1} k_l \, s_l } \sum_{j_0=0}^1 \big( \varrho_{j_0, k_0} \, \alpha_{j_0} \big)    \hspace{0.2in}
 \end{eqnarray}
 where $\varrho_{j_0, k_0} \equiv   \vec{u}_{k_0}^T   \,\mathbf{V}_{\vec{s}} \,  \vec{v}_{j_0} \, / \,  (2^{n} \, \sqrt{2})$ for $j_0 =0$ or $1$  and $\mathbf{V}_{\vec{s}} \equiv  \big(\prod_{l=1}^{m-1} \mathbf{\widehat{M}}_{m-l, s_{m-l}} \big)  \, \mathbf{K}_{n-m+1} $.  Observe that $\mathbf{V}_{\vec{s}}$ is easily calculated with $\mathcal{O}(n)$ multiplications of complex $4 \times 4 $ matrices  where the matrices depend on $\vec{\Phi}$, i.e., the phase shift values in each step.

If  there exists a solution (depending on the provided values $\phi_k$ for $k \in [1, n+1]$) for  $\alpha_0$, $\alpha_1$ and the eigenvalue $e^{\imath \, \lambda_{m, \vec{s}, \vec{\alpha}}}$  by solving the equation $  \sum_{j_0=0}^1 \big( \varrho_{j_0, k_0} \, \alpha_{j_0} \big) =  e^{\imath \, \lambda_{m, \vec{s}, \vec{\alpha}}} \, \alpha_{k_0} \, / \, (\sqrt{2})^n$ for $k_0 = 0$ and $1$ simultaneously, then by using (\ref{eq11}) and (\ref{eq13})  it is observed that $\ket{E}_{m,\vec{s}, \vec{\alpha}}$ becomes  an eigenvector of $U_{\vec{\Phi}}$ as follows: \begin{eqnarray}
U_{\vec{\Phi}} \, \ket{E}_{m,\vec{s}, \vec{\alpha}} & = &  e^{\imath \, \lambda_{m, \vec{s}, \vec{\alpha}}} \, \ket{E}_{m,\vec{s}, \vec{\alpha}} 
 \end{eqnarray}
where the eigenvalue is equal to the following:
\begin{equation}
\label{eq14}
e^{\imath \, \lambda_{m, \vec{s}, \vec{\alpha}}} \,  = \, \frac{\varrho_{0, 0} \, \alpha_0 + \varrho_{1, 0} \, \alpha_1}{\alpha_0 \, / \, (\sqrt{2})^n}  =\frac{\varrho_{0, 1} \, \alpha_0 + \varrho_{1, 1} \, \alpha_1}{\alpha_1 \, / \, (\sqrt{2})^n}
\end{equation} 

We can constrain $\alpha_0$ and $\alpha_1$ further by setting one of them as real, e.g., similar to the Bloch sphere representation by omitting a global phase, i.e., $\alpha_0 \ket{0} \, + \, \alpha_1 \, \ket{1} \equiv \cos(\varphi \, / \, 2)  \, \ket{0} \, + \, e^{\imath \, \gamma} \sin(\varphi \, / \, 2)  \, \ket{0}$. Assume that $\alpha_0 = a_0 \, + \, \imath \, b_0 $ and $\alpha_1 = a_1 \, + \, \imath \,b_{1}$ where either $b_0$ or $b_1$ equal to zero. Another constraint is that the eigenvector in (\ref{eq11}) should be normalized so that $\vert \alpha_0 \vert^2 \, + \,\vert \alpha_1 \vert^2 = 1$. 

Then, the four unknowns, i.e., $a_0$, $b_0$, $a_1$ and $b_1$ can be solved with the four polynomial equalities with one of them having complex coefficients. The first equality is due to  $\vert \sum_{j_0=0}^1 \big( \varrho_{j_0, k_0} \, \alpha_{j_0} \big) \vert^2$ being equal to $\vert \alpha_{k_0} \vert ^2 \, / \, 2^n$ for $k_0 \, = \, 0$ or $1$:
\begin{eqnarray}
\label{eq15}
\vert \varrho_{0, k_0} \, (a_0   +  \, \imath \, b_0) + \varrho_{1, k_0} \, \big(a_1  \, +  \,\imath \, b_1\big) \vert^2   & = & (a_{k_0}^2 \,  + \, b_{k_0}^2) \, / \, 2^n  \hspace{0.4in}  
\end{eqnarray}
The second equality is due to the expanded form of (\ref{eq14}) by excluding the eigenvalue as follows:
\begin{eqnarray}
\label{eq16}
(a_1   + \, \imath \, b_1) \big( (a_0   +   \imath \,  b_0) \varrho_{0, 0} +     (a_1   +   \imath \, b_1)   \varrho_{1, 0} \big)  
-(a_0   +   \imath \, b_0) \big( (a_0   + \, \imath \,  b_0)   \varrho_{0, 1}   +  (a_1  +   \imath  \, b_1)   \varrho_{1, 1} \big)  =  0  \hspace{0.4in}
\end{eqnarray}
Finally, the third and fourth equalities are as follows:
\begin{eqnarray}
\label{eq17}
 a_0^2 \, + \, b_0^2 + \, a_1^2 \, + \, b_1^2 & \, = \, & 1 \\
 \label{eq18}
  b_0 \, b_1 &\, = \, & 0 
\end{eqnarray}

It can be observed that if $b_1 = 0$, and $(\alpha_0  = a_0 \, + \, \imath \, b_0, \alpha_1 = a_1)$ is a solution to the equations, then $(\alpha_0 = a_1, \alpha_1 = - a_0 \, + \, \imath \, b_0)$ (or its multiplication with some phase) becomes also a solution by creating a duality for the single photon state for the same $\vec{s}$. The trigonometric proof is provided in Appendix \ref{appB} with some minor open issues.  

Observe that $\mathbf{V}_{\vec{s}}$ includes multiplications of $4 \times 4$ matrices composed of  $2 \times 2$ block matrices applying rotations depending on $\phi_k$ for $k \in [1, n+1]$.   It is an open issue to determine the conditions on $\phi_k$ for $k \in [1, n+1]$ under which there exist solutions satisfying four equalities (\ref{eq15}-\ref{eq18}) and the duality condition with all different eigenvalues  so that the conditions and assumptions in Theorem-1 and consecutively Conjecture-1 are satisfied.  In fact, it can be hypothesized that uniformly distributed values of phase shifts result in different eigenvalues satisfying the proposed assumptions based on extensive numerical analysis.

\section{Open Problems}
\label{sec6}
\begin{itemize}
\item The existence of the specific eigenstructure form of $U_{\vec{\Phi}}$ definitely depends on the chosen  phase shift values $\phi_k$ for $k \in [1, n+1]$. It is an open issue under which conditions it provides such a favorable eigenstructure to be utilized in Theorem-1. In numerical analysis, it is observed that uniformly distributed random values of $\phi_k$ for $k \in [1, n+1]$ produce such operators $U_{\vec{\Phi}}$  having the desired structure by providing numerical evidence.
\item Quantum polynomial-time complexity solution in Theorem-1 depends on the  existence  of black-boxes or oracles, i.e., controlled-$U_{\vec{\Phi}}^{2^j}$ operations  for $j \in [0, \widetilde{t} \, - \,1]$. We have provided a design of $U_{\vec{\Phi}}$ based on linear optics and WPDs while also providing polynomial size quantum circuit implementation. It is an open issue how to implement exponentially large powers of the designed $U_{\vec{\Phi}}$  with polynomial size quantum circuits.  Is there any specific set of phase shift values $\phi_k$ for $k \in [1, n+1]$ so that both Conjecture-1 is satisfied and polynomial size quantum circuits exist for implementing $U_{\vec{\Phi}}^{2^j}$ based on the proposed linear optical design?
\item Is there any other practical design of $U_{\vec{\Phi}}$ having the desired eigenstructure so that its exponentially large powers can be implemented with polynomial size quantum circuits? 
\item What are the effects of noise in input state and how are the estimation errors modeled? 
\item How can the proposed unitary operator design be exploited in QST of mixed state inputs?
\item The designed operator $U_{\vec{\Phi}}$ has eigenvalues and eigenvectors to be calculated easily by using multiplications of matrix factorizations as shown in Section \ref{sec5}. It is conjectured as a one-way function promising to be utilized in various cryptographic algorithms as an open issue. 
\end{itemize}

\begin{acknowledgments}
This work was supported by TUBITAK (The Scientific and Technical Research Council of Turkey) under Grant $\#$119E584.
\end{acknowledgments}

\appendix

\section{Evolution of single photon state in linear optical set-up with multiple WPDs}
\label{appA}
We assume that  $BS_k$ in Fig. \ref{fig2}(a) applies the following operator:
\begin{equation} 
\label{eq3}
U_{BS,k} \, = \, \begin{bmatrix}
\cos(\theta_k) & \imath \, \sin(\theta_k) \\
\imath \, \sin(\theta_k) & \cos(\theta_k)\\
\end{bmatrix} 
\end{equation}
The standard phase shift operator shifts the phase of the photon in the path with the following for the chosen path index $j = 0$ or $1$:
\begin{equation} 
\label{eq4}
\Phi_{k,j} \, \equiv   \, \begin{bmatrix}
1 &   0 \\
0 & e^{\imath \, \phi_{k,j}}\\
\end{bmatrix} 
\end{equation}

WPD with the index $k$ applies either $U_{k,0}$ or  $U_{k,1}$ depending on the selected path of the photon passing through BS. Although there is no constraint on the unitaries and the dimension of Hilbert space in WPDs, we assume that each WPD state is two dimensional with the corresponding single qubit unitary gates $U_{k,0}$ and $U_{k,1}$. We assume that the initial states in WPDs at time $t_0$ are pure states represented by $\ket{D_n} \, \otimes \, \ket{D_{n-1}} \, \hdots \otimes  \ket{D_1}$ in order to better observe the evolution of each path through consecutive WPDs.  At time $t_i$ the state is denoted as $\ket{\Psi_i}$.   Then at time $t_0$, the initial state is as follows:
\begin{equation}
\ket{\Psi_0} \, = \,\ket{Q} \, \otimes \, \ket{D_n} \, \otimes \, \ket{D_{n-1}} \, \hdots \, \otimes \, \ket{D_1}
\end{equation}
where $\ket{Q} = \alpha_0 \, \ket{0} \, + \, \alpha_1 \, \ket{1}$ denotes the initial state of the single photon.  Assume that $\ket{Q} = \ket{0}$ to analyze the evolution of the WPD states more simply.  In the following, we remove the tensor product $\otimes$ in order to simplify the expressions which can be understood from the context. Furthermore, we reorder the positions of the WPD states for the following modeling in order to better characterize the evolution since the interaction of WPDs starts with the first WPD. At the end, we correct again.

After the first beam splitter $BS_1$, the state is transformed into the following:
\begin{equation}
  U_{BS,1} \, \ket{0} \,\ket{D_1}   \, \ket{D_{2}} \, \hdots   \, \ket{D_n}  
\end{equation}
The state becomes  $\big(\cos(\theta_1) \ket{0} \, + \, \imath \, \sin(\theta_1) \, \ket{1}\big)\, \ket{D_1   \,  D_{2} \, \hdots   \,  D_n}   $. Then, unitary operator is applied on the first WPD state depending on the path chosen by the photon and the state becomes entangled with the photon path as follows:
\begin{equation}
 \big(\cos(\theta_1) \ket{0}  U_{1,0} \ket{D_1} \, + \, \imath \, \sin(\theta_1)  \ket{1}  U_{1,1} \ket{D_1} \big)\,\ket{  D_2  \, \hdots   \,  D_n }  \
\end{equation}
Finally, phase shift operator $\Phi_{1,j}$ is applied depending on the path resulting in $\ket{\Psi_1}$ as a superposition of two paths entangling the photon and the first WPD state:
\begin{eqnarray}
\ket{\Psi_1} \,   = \,  \big( \cos(\theta_1) \, e^{\imath \, \phi_{1,0}}  \, \ket{0}  U_{1,0} \ket{D_1}   \, + \, \imath \, \sin(\theta_1)  \, e^{\imath \, \phi_{1,1}}  \, \ket{1}  U_{1,1} \ket{D_1} \big) \,  \ket{ D_2  \, \hdots   \,  D_n }   
\end{eqnarray}
Defining $\widetilde{c}_{k,j} \equiv \cos(\theta_k) \, e^{\imath \, \phi_{k,j}}$, $\widetilde{s}_{k,j} \equiv \imath \, \sin(\theta_k)  \, e^{\imath \, \phi_{k,j}}  $ and $ \ket{d_j}_{k} \equiv U_{k,j} \ket{D_k}$ simplifies the expression as $\ket{\Psi_1}    =    \,   \big(  \widetilde{c}_{1,0}  \, \ket{0} \ket{d_0}_{1}   +    \widetilde{s}_{1,1} \, \ket{1}   \ket{d_1}_{1} \big) \, \ket{ D_2  \, \hdots   \,  D_n }$. Similarly, the entanglement of the photon with the next WPD detector results in  $\ket{\Psi_2}$ consisting of the superposition of four paths:
\begin{eqnarray}
\ket{\Psi_2} = && \big( \ket{0}( \widetilde{c}_{2,0}\, \widetilde{c}_{1,0} \, \ket{d_0}_{1}  \, + \, \widetilde{s}_{2,0}\, \widetilde{s}_{1,1} \, \ket{d_1}_{1} ) \ket{d_0}_{2} \, + \ket{1}( \widetilde{c}_{1,0}\, \widetilde{s}_{2,1} \, \ket{d_0}_{1}  \, + \, \widetilde{c}_{2,1}\, \widetilde{s}_{1,1} \, \ket{d_1}_{1} ) \ket{d_1}_{2} \big)  \nonumber \\
&&  \, \ket{ D_3  \, \hdots   \,  D_n } \,  \hspace{0.2in}
\end{eqnarray}
It can be simply iterated until the $(n+1)$th step to compute $\ket{\Psi_{n+1}}$ as follows after reordering the positions of the WPD states:
\begin{eqnarray}
\sum_{j_{1} =0}^{1} \,\sum_{j_{2} =0}^{1} \, \hdots \, \sum_{j_{n} =0}^{1} \,   \sum_{j_{n+1} =0}^{1} \,  A(j_1, \hdots, j_n, j_{n+1})    \,  \, \ket{j_{n+1}}_{n+1} \, \ket{d_{j_n}}_n \, \hdots \, \ket{d_{j_1}}_1\hspace{0.15in}  &&  
\end{eqnarray}
where we denote the final state of the photon as $\ket{j_{n+1}}_{n+1}$ and $A(j_1, \hdots, j_n, j_{n+1})$ is defined as follows:
\begin{eqnarray}
A(j_1, \hdots,  j_{n+1}) & \equiv & K_{j_1} \prod_{k=1}^{n} \chi_{k+1, j_{k}, j_{k+1}} 
\end{eqnarray}
where the following amplitudes are defined:
\begin{eqnarray}
    \chi_{k+1, j_{k}, j_{k+1}} & = & 
\begin{cases}
    \widetilde{c}_{k+1, j_{k+1}},               & \text{if } j_{k} \, = \, 0, \, j_{k+1}\,=\,0 \\
    \widetilde{s}_{k+1, j_{k+1}},              & \text{if } j_{k} \, = \, 1, \, j_{k+1}\,=\,0 \\
    \widetilde{s}_{k+1, j_{k+1}},               & \text{if } j_{k} \, = \, 0, \, j_{k+1}\,=\,1 \\
    \widetilde{c}_{k+1, j_{k+1}},              & \text{if } j_{k} \, = \,1, \, j_{k+1}\,=\,1 \\
\end{cases}\\
    K_{j_1} & = & 
\begin{cases}
    \widetilde{c}_{1,0},               & \text{if } j_{1} \, = \, 0  \\
    \widetilde{s}_{1,1},              & \text{if } j_{1} \, = \, 1 \\
 \end{cases}
\end{eqnarray}

\section{Duality of the solution for  ancillary qubit state}
\label{appB}

Assume that the following equalities are satisfied for some unknown $\omega$, $\gamma_{0, 0}$,  $\gamma_{0, 1}$  and  $\gamma_{1, 0}$ depending on $\phi_k$ for $k \in [1, n+1]$ for varying $\vec{s}$ as discussed in Conjecture-1 in Section \ref{sec5a}: 
\begin{eqnarray}
\label{appBeq1}
 \varrho_{0,0} & \, = \,& \big(1  \, / \, (\sqrt{2})^n \big) \, \cos(\omega) \, e^{\imath \, \gamma_{00}} \\
 \label{appBeq2}
  \varrho_{1,0} & \, = \,& \big(1  \, / \, (\sqrt{2})^n \big) \, \sin(\omega)\, e^{\imath \, \gamma_{10}} \\
  \label{appBeq3}
   \varrho_{0,1} & \, = \,& \big(1  \, / \, (\sqrt{2})^n \big) \, \sin(\omega)\, e^{\imath \, \gamma_{01}} \\
   \label{appBeq4}
    \varrho_{1,1} & \, = \,& \big(1  \, / \, (\sqrt{2})^n \big) \, \cos(\omega)\, e^{\imath \, (\gamma_{01} \, + \, \gamma_{10} \,- \, \gamma_{00}\,- \, \pi)} 
\end{eqnarray}
where $\varrho_{j_0, k_0}$ are expressed for $j_0, \, k_0 \, \in [0,1]$. Furthermore, assume that the equalities (\ref{eq15}-\ref{eq18}) have a solution with $b_1 \, = \,0$, $a_0 =  \mu_0$, $b_0 = \mu_1$ and $a_1 = \mu_2$. Then, by inserting the solutions and using the assumptions in (\ref{appBeq1}-\ref{appBeq4}), the satisfied equalities in (\ref{eq15}-\ref{eq18})  are converted to the following: 
\begin{eqnarray}
E_1  \, \equiv \, &&   \mu_2 \,\sin (2 \,  \omega) \big(\mu_0 \cos (\gamma_{00} \,- \,\gamma_{10}) \, - \, \mu_1 \sin (\gamma_{00} \, - \, \gamma_{10}) \big) \,- \, \sin ^2(\omega) \big( \mu_0^2 \, + \, \mu_1^2 \, - \, \mu_2^2 \big)    \\
E_2  \, \equiv \,&&    \sin ^2(\omega) \big(\mu_0^2 \,+ \, \mu_1^2 \,- \, \mu_2^2 \big) \, + \, \mu_2 \, \sin (2 \,\omega) \big(\mu_1  \, \sin (\gamma_{00} \, - \, \gamma_{10}) \, - \, \mu_0  \, \cos (\gamma_{00} \, - \, \gamma_{10}) \big)  \\
E_3  \, \equiv \, && \sin (\omega) \Big( \big( \mu_1^2 \,- \, \mu_0^2 \big) \cos (\gamma_{00} \, + \, \gamma_{01}) \, + \, 2  \, \mu_0 \, \mu_1 \, \sin (\gamma_{00} \, + \, \gamma_{01}) \, + \, \mu_2^2 \, \cos (\gamma_{00} \,+ \, \gamma_{10})\Big)  \nonumber \\ 
&& +  \, \mu_2 \,  \cos (\omega) \, \Big( \mu_0 \big(\cos (2 \, \gamma_{00}) \, + \, \cos (\gamma_{01} \, + \, \gamma_{10}) \big) \, - \, \mu_1 \, \big(\sin (2 \, \gamma_{00}) \, + \, \sin (\gamma_{01} \,+ \, \gamma_{10}) \big) \Big)  \\
E_4  \, \equiv \, && \sin (\omega)  \Big( \big( \mu_1^2 \, - \, \mu_0^2 \big) \sin (\gamma_{00} \,+ \, \gamma_{01}) \, - \, 2  \, \mu_0 \, \mu_1  \, \cos (\gamma_{00} \, + \, \gamma_{01}) \, + \, \mu_2^2  \, \sin (\gamma_{00} \, + \, \gamma_{10}) \Big)  \nonumber \\
&& + \mu_2 \, \cos (\omega) \Big( \mu_0 \big(\sin (2  \, \gamma_{00}) \,+ \, \sin (\gamma_{01} \,+ \, \gamma_{10}) \big) \, + \, \mu_1  \, \big( \cos (2 \gamma_{00}) \, + \, \cos (\gamma_{01} \, + \, \gamma_{10}) \big) \Big)  \hspace{0.2in}\\
E_5 \, \equiv \, && \mu_0^2 + \mu_1^2 + \mu_2^2 - 1   
\end{eqnarray}
where $E_k \, = \, 0$ for $k \in [1, 5]$. Similarly, if  another solution, i.e., the dual solution, is $b_0 \, = \,0$, $a_0 =  \mu_2$, $b_1 = \mu_1$ and $a_1 = - \, \mu_0$, then (\ref{appBeq1}-\ref{appBeq4}) are converted to the following which should give $E_k \, = \, 0$ for $k \in [6, 10]$:
\begin{eqnarray}
E_6  \, \equiv \, &&   - \, E_1  \\
E_7  \, \equiv \,&&   - \, E_2  \\
E_8  \, \equiv \, && \sin (\omega)   \Big( \big( \mu_1^2 \, - \, \mu_0^2 \big)  \cos (\gamma_{00} \, + \, \gamma_{10}) \, - \, 2 \, \mu_0 \,  \mu_1  \, \sin (\gamma_{00} \, + \, \gamma_{10}) \, + \, \mu_2^2 \cos (\gamma_{00} \, + \, \gamma_{01}) \Big) \nonumber \\
&& + \, \mu_2 \cos (\omega) \Big( \mu_0  \, (\cos (2 \gamma_{00}) \,+ \, \cos (\gamma_{01} \, + \, \gamma_{10})) \, + \, \mu_1 \big( \sin (2 \,\gamma_{00}) \, + \, \sin (\gamma_{01} \, + \,\gamma_{10}) \big) \Big) \\
E_9  \, \equiv \, && \sin (\omega)  \Big( \big( \mu_1^2 \, - \, \mu_0^2 \big)  \sin (\gamma_{00} \, + \, \gamma_{10}) \, + \, 2  \, \mu_0 \,  \mu_1 \,  \cos (\gamma_{00} \, + \, \gamma_{10}) \, + \, \mu_2^2 \, \sin (\gamma_{00} \, + \, \gamma_{01}) \Big) \nonumber \\
 && + \, \mu_2 \cos (\omega) \Big( \mu_0 (\sin (2 \gamma_{00}) \, + \, \sin (\gamma_{01} \, + \, \gamma_{10})) \, - \, \mu_1  \big( \cos (2 \, \gamma_{00}) \, + \, \cos (\gamma_{01} \, + \, \gamma_{10})  \big) \Big)  \hspace{0.2in}\\
E_{10} \, \equiv \, && E_5
\end{eqnarray}
Therefore, for both the solutions to exist, $E_k$ should be zero for $k \in [1, 10]$. Equating $E_{1}$, $E_2$ and $E_3$ to zero gives the following solutions for $\omega$ and $\mu_0$:
\begin{eqnarray}
\omega \, & = & \, \tan ^{-1}\Bigg(-\, \frac{\mu_2 \,\Big( \mu_0 \big(\cos (2 \, \gamma_{00}) \, + \, \cos (\gamma_{01} \,+ \, \gamma_{10}) \big) \, - \, \mu_1 \, \big(\sin (2 \, \gamma_{00}) \, + \, \sin (\gamma_{01} \, + \, \gamma_{10}) \big) \Big)}{\left(\mu_1^2 \,- \, \mu_0^2\right) \cos (\gamma_{00} \, + \, \gamma_{01}) \, + \, 2 \,  \mu_0 \, \mu_1  \, \sin (\gamma_{00} \, + \, \gamma_{01}) \,+ \, \mu_2^2 \, \cos (\gamma_{00} \, + \, \gamma_{10})} \bigg) \hspace{0.4in} \\
\mu_0\, & = & \, -  \, \mu_1 \, \cot \left(\frac{\gamma_{01} \,- \, \gamma_{10}}{2}\right) 
\end{eqnarray} 
Substituting $\mu_0$ into $E_k$ for $k \in \lbrace  3, 4, 8, 9\rbrace$ results in the following equalities:
\begin{eqnarray}
\frac{E_3}{\cos(\gamma_{00} \, + \, \gamma_{10})} \, - \, \frac{E_8}{\cos(\gamma_{00} \, + \, \gamma_{01})} & = &  0 \\
\frac{E_4}{\sin(\gamma_{00} \, + \, \gamma_{10})} \, - \, \frac{E_9}{\sin(\gamma_{00} \, + \, \gamma_{01})} & = &  0 
\end{eqnarray}

If $\cos(\gamma_{00} \, + \, \gamma_{01})$, $\cos(\gamma_{00} \, + \, \gamma_{10})$, $\sin(\gamma_{00} \, + \, \gamma_{01})$ and $\sin(\gamma_{00} \, + \, \gamma_{10})$ are not equal to zero, then the equalities $E_3 = E_4 = E_8 = E_9 = 0$ are  satisfied together. It is also easily seen that $E_1 \, + \, E_2 = 0$, and since $E_6 \, = \, - \, E_1$, $E_7 \, = \, - \, E_2$ and $E_{10} = E_5$, the equalities $E_{1} = E_2 = E_5 = E_6 = E_ 7 = E_{10}= 0$ are also satisfied together. Therefore, all the equalities are satisfied for being equal to zero with some conditions on $\gamma_{00}$,  $\gamma_{01}$ and $\gamma_{10}$ which depend on $\phi_k$ for $k \in [1, n+1]$ for varying $\vec{s}$. 

We have utilized the assumptions about the forms of $\varrho_{k_0, j_0}$ in (\ref{appBeq1}-\ref{appBeq4}). In order to satisfy these equalities, $E_{k}$ values should be equal to zero for $k \in [11, 14]$  defined as follows:

\begin{eqnarray}
E_{11}  \, \equiv \, &&   \vert \varrho_{0,0} \vert ^2  \, + \,  \vert \varrho_{1,0} \vert ^2  \,  -\,  1 \  / \, 2^n \\
E_{12}  \, \equiv \,&&   \vert \varrho_{0,0} \vert ^2  \, + \,  \vert \varrho_{0,1} \vert ^2  \,  -\,  1 \  / \, 2^n  \\
E_{13}  \, \equiv \, &&   \vert \varrho_{0,1} \vert ^2  \, + \,  \vert \varrho_{1,1} \vert ^2  \,  -\,  1 \  / \, 2^n  \\
E_{14}  \, \equiv \,&&   \varrho_{0,0}^* \, \varrho_{1,0} +  \varrho_{0,1}^* \, \varrho_{1,1} 
\end{eqnarray}

$E_{k}$ for $k \in [11, 14]$  are calculated by using $\varrho_{j_0, k_0} \equiv   \vec{u}_{k_0}^T   \, \big(\prod_{l=1}^{m-1} \mathbf{\widehat{M}}_{m-l, s_{m-l}} \big)  \, \mathbf{K}_{n-m+1} \,  \vec{v}_{j_0} \, / \,  (2^{n} \, \sqrt{2})$  with the modeling in Section \ref{sec5a}. If the most general case is chosen with $m = n+1$, then  $\varrho_{j_0, k_0}$ becomes equal to $   \vec{u}_{k_0}^T   \, \big(\prod_{l=1}^{n} \mathbf{\widehat{M}}_{n+1-l, s_{n+1-l}} \big)    \,  \vec{v}_{j_0} \, / \,  (2^{n} \, \sqrt{2})$. One can perform the calculations with symbolic variables $\phi_k$ to check whether $E_{k} \, = \, 0$ for $k \in [11, 14]$. We have performed iterations of symbolic calculations until $n \, < \, 7$ for practical values of $n$ to observe that  $E_{k} \, = \, 0$ for $k \in [11, 14]$. It is a minor open issue to show that the equalities are satisfied for all $n$ by using trigonometry and by exploiting the rotation matrices in the definition of $\mathbf{\widehat{M}}_{k, s_{k}}$ for a more rigorous proof.

As a result, there exist  dual solutions depending on whether $\phi_k$ for $k \in [1, n+1]$ satisfies $E_k = 0$ for $k \in [1, 10]$ while $\gamma_{00} \, + \, \gamma_{01}$ and $\gamma_{00} \, + \, \gamma_{10}$ are not multiples of $ \pi   \, / \, 2$. It is an open issue to determine the class of $\phi_k$ for $k \in [1, n+1]$  providing the favorable solution and the eigenstructure. Uniformly distributed values of $\phi_k$ for $k \in [1, n+1]$ provide promising solutions in numerical analysis as a supporting numerical evidence.

\newpage
\nocite{*}

\bibliography{QIP2022_TechnicalManuscript_v7} 

\end{document}